\newcommand{\ie}{\textit{i.e.},\xspace}
\newcommand{\eg}{\textit{e.g.},\xspace}
\newcommand{\etc}{\textit{etc.}\xspace}
\renewcommand{\paragraph}[1]{\smallskip\noindent \textbf{#1} \;}
\crefname{line}{line}{lines} 
\let\cref@old@algocf@nl@sethref\algocf@nl@sethref%
\renewcommand{\algocf@nl@sethref}[1]{%
    \cref@old@algocf@nl@sethref{#1}%
    \cref@constructprefix{AlgoLine}{\cref@result}%
    \@ifundefined{cref@AlgoLine@alias}%
        {\def\@tempa{AlgoLine}}%
        {\def\@tempa{\csname cref@AlgoLine@alias\endcsname}}%
    \xdef\cref@currentlabel{%
        [\@tempa][\arabic{AlgoLine}][\cref@result]%
        \csname p@AlgoLine\endcsname\csname theAlgoLine\endcsname}}%
\let\oldnl\nl
\newcommand{\nonl}{\renewcommand{\nl}{\let\nl\oldnl}}
\newcommand{\commStep}{%
  \smallskip%
  {\SetKwComment{Comment}{}{}\Comment{\fbox{Communication step}}}
  \smallskip%
}
\newcommand{\compStep}{%
  \smallskip%
  {\SetKwComment{Comment}{}{}\Comment{\fbox{Computation step}}}
  \smallskip%
}
\newcommand{\cOr}{\textup{\textbf{or}}\xspace}
\newcommand{\cAnd}{\textup{\textbf{and}}\xspace}
\newcommand{\ttrue}{\ensuremath{\mathtt{true}}\xspace}
\newcommand{\ffalse}{\ensuremath{\mathtt{false}}\xspace}
\newcommand{\xor}{\ensuremath{\oplus}}
\newcommand{\wait}{\ensuremath{\mathsf{wait}}\xspace}
\newcommand{\prpick}{\ensuremath{\mathsf{pseudo\_random\_pick}}\xspace}
\newcommand{\communicate}{\ensuremath{\mathsf{communicate}}\xspace}
\newcommand{\observe}{\ensuremath{\mathsf{observe}}\xspace}
\newcommand{\observed}{\ensuremath{\mathsf{observed}}\xspace}
\newcommand{\ooutput}{\ensuremath{\mathsf{output}}\xspace}
\newcommand{\outputi}{\textsc{output}\xspace}
\newcommand{\initi}{\textsc{init}\xspace}
\newcommand{\proposei}{\textsc{propose}\xspace}
\newcommand{\nooutput}{\ensuremath{\mathit{no\_out}}\xspace}
\newcommand{\Async}{\ensuremath{\text{Async}}\xspace}
\newcommand{\Sync}{\ensuremath{\text{Sync}}\xspace}
\newcommand{\Vin}{\ensuremath{V_\text{in}}\xspace}
\newcommand{\vin}{\ensuremath{v_\text{in}}\xspace}
\newcommand{\Vout}{\ensuremath{V_\text{out}}\xspace}
\newcommand{\vout}{\ensuremath{v_\text{out}}\xspace}
\newcommand{\SV}{\ensuremath{\textit{OS}}\xspace}
\newcommand{\ST}{\ensuremath{\textit{SOS}}\xspace}
\newcommand{\STI}{\ensuremath{\textit{SOSI}}\xspace}
\newcommand{\FP}{\ensuremath{\textit{FP}}\xspace}
\newcommand{\CDP}{\ensuremath{\textit{CDP}}\xspace}
\newcommand{\fp}{\ensuremath{\textit{fp}}\xspace}
\newcommand{\cdp}{\ensuremath{\textit{cdp}}\xspace}
\newcommand{\seed}{\ensuremath{\textit{sd}}\xspace}
\newcommand{\information}{\ensuremath{\textit{I}}\xspace}
\newif\ifannote
    \newcommand{\anncomment}[3]{{\color{#1}[#2: #3]}}
    \newcommand{\anncomment}[3]{}
\newcommand{\xOmit}[1]{}
\newcommand{\del}[1]{}
\title{Tight Conditions for Binary-Output Tasks \texorpdfstring{\\}{} under Crashes}
\titlerunning{Tight Conditions for Binary-Output Tasks under Crashes}
\author{Timoth\'e Albouy}{IMDEA Software Institute, Spain}{timothe.albouy@imdea.org}{https://orcid.org/0000-0001-9419-6646}{}
\author{Antonio Fern\'andez Anta}{IMDEA Software Institute and IMDEA Networks Institute, Spain}{antonio.fernandez@imdea.org}{https://orcid.org/0000-0001-6501-2377}{}
\author{Chryssis Georgiou}{University of Cyprus, Cyprus}{chryssis@ucy.ac.cy}{https://orcid.org/0000-0003-4360-0260}{}
\author{Nicolas Nicolaou}{Algolysis Ltd, Cyprus}{nicolas@algolysis.com}{https://orcid.org/0000-0001-7540-784X}{}
\author{Junlang Wang}{IMDEA Networks Institute, Spain}{junlang.wang@imdea.org}{https://orcid.org/0009-0003-6004-8823}{}
\authorrunning{Albouy, Fern\'andez Anta, Georgiou, Nicolaou, and Wang}
\keywords{
Distributed solvability, Asynchrony, Synchrony, Impossibility proofs, Binary-output tasks, Crash tolerance, Disagreement.
}
\begin{document}

\maketitle

\begin{abstract}
This paper explores necessary and sufficient system conditions to solve distributed tasks with binary outputs (\textit{i.e.}, tasks with output values in $\{0,1\}$). We focus on the distinct output sets of values a task can produce (intentionally disregarding validity and value multiplicity), considering that some processes may output no value. In a distributed system with $n$ processes, of which up to $t \leq n$ can crash, we provide a complete characterization of the tight conditions on $n$ and $t$ under which every class of tasks with binary outputs is solvable, for both synchronous and asynchronous systems. This output-set approach yields highly general results: it unifies multiple distributed computing problems, such as binary consensus and symmetry breaking, and it produces impossibility proofs that hold for stronger task formulations, including those that consider validity, account for value multiplicity, or move beyond binary outputs.
\end{abstract}

\section{Introduction}

Distributed computing examines a wide range of coordination problems involving multiple processes interacting through a communication medium, such as a message-passing network or a shared memory.
Many of these problems can be abstracted as distributed tasks, which can be seen as black boxes taking an input vector and producing an output vector.
In these vectors, each process has (at most) one private input and one private output value.
For example, the well-known \textit{consensus} problem can be represented as a task with one input value per process (the proposals) and one output value per process (the decisions).
In addition to this input/output interface, a task definition can also impose requirements on the following aspects:
\begin{itemize}
    \item \textit{Input constraint}: a constraint on the set of possible input vectors (which is especially relevant for \textit{colored} tasks~\cite{HKR13} and asymmetric problems, \eg reliable broadcast~\cite{B87});

    \item \textit{Output constraint}: a constraint on the set of possible output vectors (\eg in consensus, all processes decide the same value);

    \item \textit{Validity}: connection between the inputs and outputs (\eg in consensus, the decided value must have been previously proposed by some process).
\end{itemize}
A central challenge in this field is to characterize the necessary and sufficient conditions under which distributed tasks can be solved, with many possible variations based on different assumptions on timing (asynchrony, synchrony) or failures (crashes, Byzantine faults, omissions).
The celebrated FLP theorem exemplifies this approach by demonstrating that consensus is impossible in any asynchronous system with even one process crash~\cite{FLP85}.

\paragraph{Towards a unifying framework for task solvability.}
When exploring the boundary between solvable and unsolvable tasks in distributed computing, a strategic approach is to analyze the weakest, most general possible version of a task for which an impossibility result still holds.
An impossibility proof for a weak task is particularly powerful because it automatically applies to all stronger versions of that same task.

Much of the literature on distributed computability has focused on particular tasks, such as consensus~\cite{FLP85}, renaming~\cite{CRR11}, set agreement~\cite{C93}, or election~\cite{KKM90}.
These studies have established powerful solvability and impossibility results, but they often rely on model-specific arguments (\eg on the communication medium) or on constraints such as validity, which tie output to input values.
This fragmentation makes it difficult to see the common computational structure underlying different tasks.
A natural question, therefore, is whether there exists a unifying perspective that abstracts away from inputs and instead focuses on the essential combinatorial structure of task outputs.

\paragraph{Our approach.}
In this paper, we address this question for the case of \textit{binary-output} tasks (\ie whose output values belong to $\{0,1\}$) in \textit{crash-prone} systems.
We introduce a classification based only on the \textit{sets of distinct output values} that executions of a binary-output task $T$ may produce, disregarding multiplicity in the output values.
More formally, we say that a single execution of $T$ produces an \textit{output set} included in $\{0,1\}$, considering that some processes may output no value (possibly intentionally).
There are only four possible binary output sets produced by an execution of $T$: $\varnothing$, $\{0\}$, $\{1\}$, and $\{0,1\}$.
Hence, the \textit{set of output sets} of $T$ is defined as the set $O \subseteq \{\varnothing, \{0\}, \{1\}, \{0,1\}\}$ of all possible output sets that can be produced across all the executions of $T$ (by providing all possible inputs).

This very weak abstraction allows us to strip tasks down to their minimally observable outcomes and ask: given the system size $n$ and crash tolerance $t$, which sets of output sets are achievable?
Or, conversely, given a set of output sets $O$, which combinations of $n$ and $t$ make $O$ achievable?
Our main result is a complete characterization of the tight conditions on $n$ and $t$ that determine solvability for all possible sets of output sets $O$, both in synchronous and asynchronous models.

The fact that our framework abstracts away the inputs and only focuses on the output sets of binary-output tasks, implies that impossibility proofs established are directly applicable to a broader range of stronger, more constrained tasks, including validity-based and multi-valued tasks.
This shift in focus does not just simplify the analysis; it also allows for the generalization of fundamental necessary conditions that apply to stronger task formulations.
Moreover, our framework remains expressive enough to capture a wide array of tasks, from classic ones like binary consensus to novel ones such as a new form of symmetry breaking.

It is important to note that, while our approach excels at generalizing impossibility results, the specific system conditions on $n$ and $t$ we identify in this paper are guaranteed to be tight only for binary-output tasks without validity restrictions.
Any additional constraint, such as a validity condition or a requirement on value multiplicity, may necessitate a stronger condition on the system parameters to achieve solvability.

\paragraph{Contributions and roadmap.}
Our main contributions are the following.
\begin{enumerate}
    \item \textit{A novel framework for studying binary-output tasks}: We introduce a new methodology striving to unify all distributed tasks with binary output values.
    More precisely, we focus on the sets of distinct output bits that can be produced by these tasks in crash-prone environments, abstracting away the input values, the output multiplicity, or the communication medium  (message passing or shared memory).
    As a result, the formalization that we obtain (see \Cref{sec:formal-pb}) is quite simple (it only relies on combinatorial techniques), thus facilitating formal reasoning, while staying expressive enough to capture many significant families of distributed tasks.

    \item \textit{A complete solvability characterization}: We exhaustively examine all 16 possible combinations of distinct output bits that binary-output tasks can produce, and we provide the tight conditions to implement each of them (see \Cref{sec:characterization}, \Cref{tab:characterization}), both in the asynchronous and synchronous cases, by proving both their necessity (see \Cref{sec:necessity}), via impossibility proofs, and their sufficiency (see \Cref{sec:sufficiency}), via algorithms and correctness proofs.

    \item \textit{New symmetry-breaking problems}: As an interesting twist, our novel way to classify binary-output tasks allowed us to discover new interesting problems, in particular, one that we baptized \textit{disagreement} (different from classical weak/strong symmetry breaking), which must always guarantee that the system does not agree on one single output value (see \Cref{sec:sufficiency}).
\end{enumerate}

\Cref{sec:related-work} exposes the research landscape in which this work is situated, and
concluding remarks are provided in \Cref{sec:conclusion}.
For the sake of presentation clarity, \Cref{apx:more-algos} contains omitted algorithms and proofs.

\section{Related Work} \label{sec:related-work}

Our work strives to classify and characterize many classes of distributed tasks.
In this section, we first present the tasks that are relevant to our characterization (along with some of their most salient (un)solvability results), and we then review the existing endeavors attempting to unify entire families of tasks under the same framework.

The two most studied (non-distinct) families of distributed tasks are \textit{agreement} and \textit{symmetry breaking}.
Informally, in agreement, we want to constrain the maximum number of different outputs, while in symmetry breaking we want to do the opposite, \ie constrain the minimum number of different outputs.

\paragraph{Agreement.}
The agreement family includes tasks such as $k$-set agreement~\cite{C93}, reliable broadcast~\cite{B87}, or consensus~\cite{L96}.
In particular, consensus is a fundamental agreement abstraction of distributed computing, where all participants propose a value and must eventually agree on one of the proposed values.
A foundational impossibility result for this problem is the FLP theorem~\cite{FLP85}, which shows that consensus is impossible in an asynchronous system with even one process crash.
In this context, Mostéfaoui, Rajsbaum, and Raynal explored how conditions on the input vector can make consensus solvable in asynchronous crash-prone systems~\cite{MRR03}.
In a way, we take the opposite approach: instead of characterizing the solvability of a task based on its \textit{input}, we focus on its \textit{output} (yielding necessary conditions that also hold if we account for inputs).
Aside from the asynchronous case, some other works also study the solvability of agreement tasks in synchronous systems~\cite{R02a,SW89}.

\paragraph{Symmetry breaking.}
The symmetry-breaking family includes tasks such as leader election~\cite{KKM90}, renaming~\cite{CRR11}, and weak/strong symmetry breaking (WSB/SSB)~\cite{IRR11}.
In leader election, only one process outputs $1$, while all others output $0$.
In $k$-renaming, $k$ processes in the system output different values (which correspond to their new identities).
Notice that our binary-output approach captures $2$-renaming, but not ${>}2$-renaming.
WSB guarantees that $0$ and $1$ are output only if all processes are correct, and SSB adds the property that $1$ is always output (even if there are faults).
Some solvability results also exist for symmetry breaking, \eg on the wait-free solvability of WSB and renaming~\cite{C10}.

\paragraph{Unification attempts.}
Several endeavors before us have attempted to unify multiple classes of distributed tasks under the same framework, in order to draw general results from their common structure.
For instance, \textit{generalized symmetry breaking} (\textit{GSB}) aims to unify many symmetry-breaking tasks \cite{CRR13,IRR11}, such as election, renaming, and weak symmetry-breaking.
Furthermore, many works employ a topological approach to study the computability of distributed tasks.
With topology, tasks are modeled as input and output complexes, and solvability is characterized by the existence of continuous maps, with impossibility linked to invariants such as connectivity.
One of the most prominent results in this area is the \textit{asynchronous computability theorem} (\textit{ACT}), which characterizes the solvability of wait-free tasks in an asynchronous shared-memory model~\cite{HS99}.
The \textit{musical benches} problem also leverages topological techniques to capture $2$-set agreement and $2$-renaming under the same abstraction in a wait-free shared-memory context \cite{GR05}.
For more material on the applications of combinatorial topology to distributed computing, we refer the interested reader to Herlihy, Kozlov, and Rajsbaum's monograph~\cite{HKR13}.

\section{Computing Model and Problem Formalization} \label{sec:formal-pb}

\subsection{Computing Model}


For clarity, we provide in \Cref{tab:notations} a list of concepts and notations used in this paper.

\begin{table}[t]
\centering
\caption{Concepts and notations used in this paper.}
\label{tab:notations}
\begin{tabular}{|c|c|}
    \hline
    \textbf{Concept or notation} & \textbf{Meaning} \\
    \hline\hline
    $p_i$ & process of the system with identity $i$ \\
    \hline
    $P$ & Set of processes in the system \\
    \hline
    $n$ & Number of processes in the system ($0 \le n = |P|$) \\
    \hline
    $t$ & Upper bound on the number of crashed processes ($0 \le t \le n$) \\
    \hline
    $f$ & Effective number of Byzantine processes in a run ($0 \leq f \leq t$) \\
    \hline
    $\sigma \in \{\Sync,\Async\}$ & Timing model (synchronous or asynchronous) \\
    \hline
    PRNG & Pseudo-random number generation \\
    \hline
    $\prpick(S)$ & Function returning a pseudo-random value $v$ from set $S$ \\
    \hline
    $T, A, E$ & Task, algorithm, execution \\
    \hline
    $\Vin \in (\mathbb{I} \cup \{\bot\})^n$ & Input vector \\
    \hline
    $\Vout \in (\mathbb{O} \cup \{\bot\})^n$ & Output vector, with $\mathbb{O}=\{0,1\}$ \\
    \hline
    $\bot$ & Sentinel value denoting no input/output \\
    \hline
    $\star$ & Unspecified value \\
    \hline
    $\nooutput$ & Boolean indicating if $\varnothing$ is a possible output set or not \\
    \hline
\end{tabular}
\end{table}

\paragraph{Process model.}
We consider a distributed system with a set $P=\{p_1, p_2, ..., p_n\}$ of processes ($|P| = n$). 
Processes are deterministic computing entities that take steps according to their local state and the events they observe, following an algorithm $A$.
Failures are restricted to \textit{crash faults}: in an execution of the system a process may halt prematurely and take no further steps, but it does not deviate from its algorithm before crashing.
We assume an upper bound $t$ on the number of processes that may crash in an execution, with $0 \leq t \leq n$.
A process that does not crash in an execution $E$ is said to be \textit{correct} in $E$.

\paragraph{Communication model.}
Processes interact through a generic communication medium.
This medium is \textit{reliable} as far as it does not suppress, duplicate, or corrupt information.
Every process $p \in P$ has access to two abstract operations that capture the behavior of this communication medium:
\begin{itemize}
    \item $\communicate$ $\information$: process $p$ disseminates some information $\information$ to the system processes;

    \item $\observe$ $\information$ (callback event): process $p$ is notified that information $\information$ was communicated.
\end{itemize}

From a terminology point of view, we say that processes \textit{communicate} and \textit{observe} information.
The medium guarantees that all correct processes eventually obtain a consistent view of the set of communicated information, while crashed processes may only have partial but always valid views.
More formally, the communication abstraction satisfies the following properties (the ``C'' prefix stands for ``communication'').
\begin{itemize}
    \item \textbf{C-Validity:} If a process $p$ observes information $\information$, then $\information$ must have been previously communicated by some process $p'$.

    \item \textbf{C-Local-Termination:} If a \textit{correct} process $p$ communicates information $\information$, then some \textit{correct} process $p'$ (if there is any) eventually observes $\information$.

    \item \textbf{C-Global-Termination:} If a process $p$ observes information $\information$, then all \textit{correct} processes eventually observe $\information$.
\end{itemize}

C-Validity is a safety property, while C-Local-Termination and C-Global-Termination are liveness properties.
These $\communicate/\observe$ operations can be implemented straightforwardly on top of classic communication media such as message-passing networks or shared memory, regardless of the number of crashes $t$ or the timing assumptions (synchronous, asynchronous, \etc).
For instance, in message passing, they can be realized using reliable broadcast~\cite{B87} (implementable under synchrony and asynchrony with any $t \leq n$), which ensures consistency of observations.
They can also be implemented from atomic shared memory\footnote{
    Atomic memory can itself be constructed from weaker memory models, such as safe or regular memory~\cite{R13}.
}: each process writes its communicated information into a new register, while other processes periodically scan all registers to detect new observable information.

\paragraph{Timing models: $\Async$ and $\Sync$.}
As usual, we assume the existence of a global clock, to which processes have no access.
We consider two classical timing models: asynchrony and synchrony, respectively denoted $\Async$ and $\Sync$.
For simplicity, we assume in both models that local computation occurs instantaneously, while only communication takes time.

In $\Async$, the speed of information propagation is arbitrary but positive: if a correct process communicates some information $\information$, then all correct processes eventually observe $\information$, but the delay before the observation may be unbounded.

In $\Sync$, communication and process execution proceed in globally coordinated lock-step rounds.
Each round consists of two steps: first, a \textit{communication step}, where processes communicate and observe information, and a \textit{computation step}, where processes can perform actions such as outputting values~\cite{R02a}.
Importantly, all information communicated during the communication step of a round is observed in that step, before the computation step of the same round begins.
Therefore, in the $\Sync$ model, the communication medium provides the following additional liveness property.

\begin{itemize}
    \item \textbf{C-Synchrony:} If an arbitrary process $p$ communicates some information $\information$ at the start of synchronous round $R$, then all processes that observe $\information$ do so during the communication step of round $R$.
\end{itemize}

\paragraph{Pseudo-randomness.}
The algorithms presented in this paper rely on pseudo-random number generation (PRNG).
Importantly, the use of PRNG does not strengthen the computing model: PRNG can be implemented deterministically given a local seed, and seeds can be derived from the inherent nondeterminism in distributed executions~\cite{KSW21}.
The correctness of our algorithms requires that the generated numbers are different across all executions (a perfectly uniform PRNG distribution is unnecessary).
For that, it is enough that local seeds change from one execution to another.
In the algorithms of this paper, we rely on the $\prpick(V)$ function that, given a set of values $V$, returns a value $v \in V$ pseudo-randomly.

\subsection{Problem Formalization} \label{sec:formalization}

\paragraph{Outputs.}
In an execution, a process $p \in P$ outputs a value $v \in \{0,1\}$ using the operation $\ooutput$ $v$.
In a given execution, each process $p \in P$ outputs \textit{at most one} value.
Hence, all its invocations of the $\ooutput$ operation must have the same value.

We often use the XOR operation to obtain the one's complement of a bit $v \in \{0,1\}$.
The logical formula we use is $1 \xor v$, which gives $1$ if $v=0$, and $0$ if $v=1$.

\paragraph{Tasks.}
We define a task $T$ as a relation between input vectors 
$\Vin = (\vin^1, ..., \vin^n)$ and output vectors 
$\Vout=(\vout^1, ..., \vout^n)$, where $\vin^i$ and $\vout^i$ are the input and output values of process $p_i$, respectively.
That is, we have $(\Vin,\Vout) \in T$ if and only if $\Vout$ is a possible output of the task when the input is $\Vin$.
The elements $v_\text{in}^i$ of an input vector $\Vin$ belong to the set $\mathbb{I} \cup \{\bot\}$, where $\mathbb{I}$ is the input alphabet and $\bot$ is a sentinel value (not in $\mathbb{I}$) that represents no input.
Similarly, the elements $v_\text{out}^i$ of an output vector $\Vout$ belong to the set $\mathbb{O} \cup \{\bot\}$, where $\mathbb{O}$ is the output alphabet and $\bot$ is a special value (not in $\mathbb{O}$) that represents no output.
In this work we consider only \textit{binary-output} tasks, \ie $\mathbb{O}=\{0,1\}$.
Observe that a task $T$ can have several possible outputs for the same input.
By abuse of notation, we denote the set of all outputs $\Vout$ of $T$ for an input $\Vin$ as $T(\Vin) = \{\Vout \mid (\Vin,\Vout) \in T\}$.

\paragraph{Output sets.}
In this work, we will not differentiate output vectors that contain the same set of different output values.
For that, we use the following notation for the set of distinct values in an output vector $\Vout$, which we call an \textit{output set}:
$\SV(\Vout) = \{\vout^i \in \Vout \mid \vout^i \neq \bot\}$.
Since we consider only binary-output tasks, it holds that 
$\SV(\Vout) \in 2^{\{0,1\}} = \{\varnothing,\{0\},\{1\},\{0,1\}\}$.

We then define the \textit{set of output sets} of a pair task/input vector $(T,\Vin)$ as the set of all possible output sets that can be produced by executing $T$ with $\Vin$:
$\STI(T,\Vin) = \{\SV(\Vout) \mid \Vout \in T(\Vin)\}$.
Observe that $\STI(T,\Vin) \subseteq \{\varnothing,\{0\},\{1\},\{0,1\}\}$, and that $\STI(T,\Vin) = \varnothing$ occurs when $T(\Vin) = \varnothing$, because there is no $\Vout$ such that $(\Vin,\Vout) \in T$.

Finally, we define the \textit{set of output sets} of a task $T$ as the set of all possible output sets that $T$ can produce with all possible input vectors $\Vin$:
$\ST(T) = \{\SV(\Vout) \mid \exists \Vin \in (\mathbb{I} \cup \{\bot\})^n: \Vout \in T(\Vin)\}$.
Observe again that $\ST(T) \subseteq \{\varnothing,\{0\},\{1\},\{0,1\}\}$.
Hence, all tasks $T$ can be classified into 16 classes, each characterized by which subset of $\{\varnothing,\{0\},\{1\},\{0,1\}\}$ the set of output sets $\ST(T)$ is.

\paragraph{Algorithms and executions.}
An \textit{algorithm} $A$ defines the steps taken by each process $p \in P$ as a function of its local state (including its input) and the communication observed.
The algorithm $A$ combined with input $\Vin$ is denoted $A(\Vin)$.
An \textit{execution} $E$ of $A(\Vin)$ is a (finite or infinite) sequence of steps taken by processes following algorithm $A$ with input $\Vin$, combined with the observation of communication events.
The \textit{output} of an execution $E$ of $A(\Vin)$ is the vector $\Vout$ with the output value that is produced in $E$ by each process $p$, or $\bot$ if no value is output by the process.

\paragraph{Failure and communication patterns.}
A \textit{failure pattern} defines in an execution the identity of faulty processes and the time instant each faulty process stops taking steps.
The possible failure patterns in an execution depend on $n$ and the value of $f \leq t$ when defined, which is the exact number of failures.
We denote the set of all failure patterns for a given pair $(n,f)$ by $\FP(n,f)$.

A \textit{communication-delay pattern} defines the delay experienced by each communication in the execution.
The possible communication-delay patterns in an execution depend on the synchrony ($\sigma \in \{\Async, \Sync\}$) of the system.
We denote the set of all communication-delay patterns for a given $\sigma$ by $\CDP(\sigma)$.

Since we only consider deterministic algorithms and local computation is instantaneous, an execution is fully characterized by the tuple $(A,\Vin,\seed,\fp,\cdp)$, where $A$ is the algorithm, $\Vin$ the input vector, $\seed$ is the seed of the PRNG, $\fp$ is the failure pattern, and $\cdp$ is the communication-delay pattern.

\paragraph{Implementation of a set of output sets.}
A \textit{system configuration} is a triple $C=(n,t,\sigma)$, where $n$ is the number of processes, $t$ is the maximum number of crashed processes in any execution $E$, and $\sigma \in \{\Async, \Sync\}$ defines the timing model of the distributed system in which $E$ runs.

\begin{definition}[Implementation of a set of output sets]
Algorithm $A$ \emph{implements} the non-empty set of output sets $O \subseteq \{\varnothing,\{0\},\{1\},\{0,1\}\}$ under system configuration $C=(n,t,\sigma)$ with $\sigma \in \{\Async, \Sync\}$ if for all $0 \leq f \leq t$, we have the following properties.
\begin{itemize}
    \item \textbf{Safety:} For all $\Vin$, all possible PRNG seeds $\seed$, all failure patterns $\fp \in \FP(n,f)$, and all communication-delay patterns $\cdp \in \CDP(\sigma)$, all executions $(A,\Vin,\seed,\fp,\cdp)$ of $A(\Vin)$ in a system with $n$ processes, $f$ failures, and timing model $\sigma$, have output vectors $\Vout$ such that $\SV(\Vout) \in O$.

    \item \textbf{Completeness:} For each $o \in O$, there is a $\Vin$, a PRNG seed $\seed$, a failure pattern $\fp \in \FP(n,f)$, and a communication-delay pattern $\cdp \in \CDP(\sigma)$, such that execution $(A,\Vin,s,\fp,\cdp)$ of $A(\Vin)$ has an output vector $\Vout$ where $\SV(\Vout) = o$.
\end{itemize}
\end{definition}

\begin{definition}[Binary-output task solvability]
Given a binary-output task $T$ with non-empty set of output sets $\ST(T)$, task $T$ can be solved under system configuration $C=(n,t,\sigma)$ with $\sigma \in \{\Async, \Sync\}$ only if there is an algorithm $A$ that implements $\ST(T)$ under that configuration.
\end{definition}
In the above definitions, we exclude the degenerate case where the set of output sets is empty.

\section{Tight Solvability Conditions for Binary-Output Tasks} \label{sec:characterization}

In this section, we introduce \Cref{tab:characterization}, which presents an exhaustive characterization of the tight conditions to implement all 16 possible binary sets of output sets, providing a necessary condition to solve their corresponding tasks.
The first column is the line index.
The second column corresponds to the set of output sets (the crossed sets in red are the forbidden output sets).
The third column specifies the timing model assumed ($\Async$, $\Sync$, or any model if the tightness proof does not depend on timing assumptions).
The fourth column provides the tight (\ie necessary and sufficient) system condition on $n$ (the system's size) and $t$ (the crash-resilience) to implement the given set of output sets under the corresponding timing model.
The fifth column refers to the necessity proof of the tight condition, while the sixth column refers to its sufficiency proof.

We can observe that the sets of output sets of many well-studied tasks are included in \Cref{tab:characterization}.
Line 10 corresponds to the set of output sets of classical binary consensus (\ie each execution outputs either $0$ or $1$)~\cite{FLP85}, while line 9 corresponds to abortable~\cite{BDFG03,C07} or randomized~\cite{B83} binary consensus (\ie some executions may never terminate and output a value).
Line 8 corresponds to $2$-renaming \cite{CRR11} (\ie renaming of only 2 processes in the system).
Line 2 corresponds to $2$-set agreement \cite{C93} (up to $2$ different values can be output by the system).
Line 4 (and its symmetric counterpart, line 6) corresponds to strong symmetry-breaking~\cite{AP16}: $1$ is always output, but $0$ can also be output in favorable conditions (\eg all processes participate, or there are no crashes).

Remark that we prove the necessity of the condition $t=0$ to solve the $\Async$ case of line 10 (implementing the set of output sets $\{\{0\},\{1\}\}$) by relying on the FLP theorem~\cite{FLP85}.
However, our model differs from the one in the original FLP paper; in particular, we do not consider consensus validity (\ie decisions are proposals), we do not require every process to output a value, and we do not assume a particular communication medium.
For this reason, we rely on \cite{AFGGNW24}, which presents another proof of the impossibility of asynchronous resilient consensus that does not make any of the aforementioned assumptions.

\newcommand{\impo}[1]{{
    \color{red}\ifmmode\text{\sout{\ensuremath{#1}}}\else\sout{#1}\fi
}}

\begin{table}[tb]
\caption{Characterization of the tight conditions for implementing the 16 sets of output sets.}
\label{tab:characterization}
\centering
\resizebox{.8\textwidth}{!}{
\begin{tabular}{|c|c|c|c|c|c|}
    \hline
    \textbf{\#} & \makecell{\textbf{Exact set of} \\  \textbf{output sets $\bm{O}$}} & \makecell{\textbf{Timing} \\ \textbf{model}} & \makecell{\textbf{Tight solvability} \\ \textbf{condition}} & \makecell{\textbf{Necessity} \textbf{proof}} & \makecell{\textbf{Sufficiency} \textbf{proof}} \\
    \hline
    1 & $\varnothing, \{0\}, \{1\}, \{0,1\}$ & \makecell{\Async \\ \& \Sync} & $n \ge t, n \ge 2$ & \Cref{obs:nec-proc-corr} & \makecell{\Cref{alg:async-all-uncond} / \Cref{thm:async-all-uncond} \\ with $V{=}\{0,1,\bot\}$} \\
    \hline
    2 & $\impo{\varnothing}, \{0\}, \{1\}, \{0,1\}$ & \makecell{\Async \\ \& \Sync} & $n > t, n \ge 2$ & \Cref{obs:nec-proc-corr} & \makecell{\Cref{alg:async-all-uncond} / \Cref{thm:async-all-uncond} \\ with $V{=}\{0,1\}$} \\
    \hline
    3 & $\varnothing, \impo{\{0\}}, \{1\}, \{0,1\}$ & \makecell{\Async \\ \& \Sync} & $n \ge t, n \ge 2$ & \Cref{obs:nec-proc-corr} & \makecell{\Cref{alg:async-all-cond} / \Cref{thm:async-all-cond} \\ with $v{=}1$, $\nooutput{=}\ttrue$} \\
    \hline
    4 & $\impo{\varnothing}, \impo{\{0\}}, \{1\}, \{0,1\}$ & \makecell{\Async \\ \& \Sync} & $n > t, n \ge 2$ & \Cref{obs:nec-proc-corr} & \makecell{\Cref{alg:async-all-cond} / \Cref{thm:async-all-cond} \\ with $v{=}1$, $\nooutput{=}\ffalse$} \\
    \hline
    5 & $\varnothing, \{0\}, \impo{\{1\}}, \{0,1\}$ & \makecell{\Async \\ \& \Sync} & $n \ge t, n \ge 2$ & \Cref{obs:nec-proc-corr} & \makecell{\Cref{alg:async-all-cond} / \Cref{thm:async-all-cond} \\ with $v{=}0$, $\nooutput{=}\ttrue$} \\
    \hline
    6 & $\impo{\varnothing}, \{0\}, \impo{\{1\}}, \{0,1\}$ & \makecell{\Async \\ \& \Sync} & $n > t, n \ge 2$ & \Cref{obs:nec-proc-corr} & \makecell{\Cref{alg:async-all-cond} / \Cref{thm:async-all-cond} \\ with $v{=}0$, $\nooutput{=}\ffalse$} \\
    \hline
    \multirow{2}{*}[-.8em]{7} & \multirow{2}{*}[-.8em]{$\varnothing, \impo{\{0\}}, \impo{\{1\}}, \{0,1\}$} & \Async & $n > \frac{3}{2}t+1, n \ge 2$ & \Cref{obs:nec-proc-corr} \& \Cref{thm:nec-01} & \makecell{\Cref{alg:async-sym-break} / \Cref{thm:async-sym-break} \\ with $\nooutput{=}\ttrue$} \\
    \cline{3-6}
     & & \Sync & $n \ge t+2, n \ge 2$ & \Cref{obs:nec-proc-corr} \& \Cref{thm:nec-min-corr} & \makecell{\Cref{alg:sync-sym-break} / \Cref{thm:sync-sym-break} \\ with $\nooutput{=}\ttrue$} \\
    \hline
    \multirow{2}{*}[-.8em]{8} & \multirow{2}{*}[-.8em]{$\impo{\varnothing}, \impo{\{0\}}, \impo{\{1\}}, \{0,1\}$} & \Async & $n > \frac{3}{2}t+1, n \ge 2$ & \Cref{obs:nec-proc-corr} \& \Cref{thm:nec-01} & \makecell{\Cref{alg:async-sym-break} / \Cref{thm:async-sym-break} \\ with $\nooutput{=}\ffalse$} \\
    \cline{3-6}
     & & \Sync & $n \ge t+2, n \ge 2$ & \Cref{obs:nec-proc-corr} & \makecell{\Cref{alg:sync-sym-break} / \Cref{thm:sync-sym-break} \\ with $\nooutput{=}\ffalse$} \\
    \hline
    9 & $\varnothing, \{0\}, \{1\}, \impo{\{0,1\}}$ & \makecell{\Async \\ \& \Sync} & $n \ge t, n \ge 1$ & \Cref{obs:nec-proc-corr} & \makecell{\Cref{alg:async-single} / \Cref{thm:async-single} \\ with $\nooutput=\ttrue$} \\
    \hline
    \multirow{2}{*}[-.2em]{10} & \multirow{2}{*}[-.2em]{$\impo{\varnothing}, \{0\}, \{1\}, \impo{\{0,1\}}$} & \Async & $t = 0, n \ge 1$ & \Cref{obs:nec-proc-corr} \& \cite{AFGGNW24,FLP85} & \makecell{\Cref{alg:async-single} / \Cref{thm:async-single} \\ with $\nooutput{=}\ffalse$} \\
    \cline{3-6}
     & & \Sync & $n > t, n \ge 1$ & \Cref{obs:nec-proc-corr} & \Cref{alg:sync-consensus} / \Cref{thm:sync-consensus} \\
    \hline
    11 & $\varnothing, \impo{\{0\}}, \{1\}, \impo{\{0,1\}}$ & \makecell{\Async \\ \& \Sync} & $n \ge t, n \ge 1$ & \Cref{obs:nec-proc-corr} & \makecell{\Cref{alg:async-all-uncond} / \Cref{thm:async-all-uncond} \\ with $V{=}\{1,\bot\}$} \\
    \hline
    12 & $\impo{\varnothing}, \impo{\{0\}}, \{1\}, \impo{\{0,1\}}$ & \makecell{\Async \\ \& \Sync} & $n > t, n \ge 1$ & \Cref{obs:nec-proc-corr} & \makecell{\Cref{alg:async-all-uncond} / \Cref{thm:async-all-uncond} \\ with $V{=}\{1\}$} \\
    \hline
    13 & $\varnothing, \{0\}, \impo{\{1\}}, \impo{\{0,1\}}$ & \makecell{\Async \\ \& \Sync} & $n \ge t, n \ge 1$ & \Cref{obs:nec-proc-corr} & \makecell{\Cref{alg:async-all-uncond} / \Cref{thm:async-all-uncond} \\ with $V{=}\{0,\bot\}$} \\
    \hline
    14 & $\impo{\varnothing}, \{0\}, \impo{\{1\}}, \impo{\{0,1\}}$ & \makecell{\Async \\ \& \Sync} & $n > t, n \ge 1$ & \Cref{obs:nec-proc-corr} & \makecell{\Cref{alg:async-all-uncond} / \Cref{thm:async-all-uncond} \\ with $V{=}\{0\}$} \\
    \hline
    15 & $\varnothing, \impo{\{0\}}, \impo{\{1\}}, \impo{\{0,1\}}$ & \makecell{\Async \\ \& \Sync} & $n \ge t, n \ge 0$ & \Cref{obs:nec-proc-corr} & \makecell{\Cref{alg:async-all-uncond} / \Cref{thm:async-all-uncond} \\ with $V{=}\{\bot\}$} \\
    \hline
    16 & $\impo{\varnothing}, \impo{\{0\}}, \impo{\{1\}}, \impo{\{0,1\}}$ & \makecell{\Async \\ \& \Sync} & N/A & N/A & N/A \\
    \hline
\end{tabular}
}
\end{table}

\section{Necessity: Impossibility Proofs} \label{sec:necessity}

In this section, we prove the necessity of the tightness conditions of \Cref{tab:characterization}, by showing the impossibility of implementing the corresponding sets of output sets when these conditions are not satisfied.

\begin{observation} \label{obs:nec-proc-corr}
Under any timing model ($\Async$ or $\Sync$), the following conditions are necessary to implement a set of output sets $O$:
$$ n \geq \max(\{|o| : o \in O\}), \hspace{1em} \text{ and } \hspace{1em} n-t \geq \min(\{|o| : o \in O\}). $$
\end{observation}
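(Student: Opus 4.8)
The plan is to establish the two inequalities separately, each from one clause of the \emph{implementation} definition: the first from \textbf{Completeness} and the second from \textbf{Safety}. The single structural fact I lean on throughout is that every process outputs \emph{at most one} value, so the number of distinct values appearing in an output vector can never exceed the number of processes that actually output something.

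For the first inequality $n \ge \max\{|o| : o \in O\}$, I would let $o^\star \in O$ be an output set of maximum cardinality and invoke Completeness (taking $f = 0$, so that all processes are correct) to obtain an execution whose output vector $\Vout$ satisfies $\SV(\Vout) = o^\star$. Since the $|o^\star|$ distinct values of $o^\star$ must be produced by $|o^\star|$ \emph{distinct} processes, each contributing at most one value, at least $|o^\star|$ of the $n$ processes output a value, giving $n \ge |o^\star| = \max\{|o| : o \in O\}$. This step is insensitive to $\sigma$, since it only uses the existence of \emph{some} execution.

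For the second inequality $n - t \ge \min\{|o| : o \in O\}$, I would exhibit a single adversarial failure pattern and appeal to Safety. Consider the failure pattern $\fp \in \FP(n,t)$ in which a fixed set of exactly $t$ processes crashes at the very start, before taking any step. In any resulting execution, only the $n - t$ surviving processes can output, so the produced output set $o'$ satisfies $|o'| \le n - t$. Safety forces $o' \in O$, whence $|o'| \ge \min\{|o| : o \in O\}$; chaining the two bounds yields $n - t \ge \min\{|o| : o \in O\}$.

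The arguments are short, and the only point needing care is the second one: I must check that the chosen failure pattern is admissible for the quantifier ``for all $f \le t$'' (instantiated at $f = t$) and that crashing processes before their first step is permitted in the synchronous model as well, so that the bound holds uniformly across both timing models. Beyond that, the degenerate boundary cases fall out automatically — for instance, if $\varnothing \notin O$ then $\min\{|o| : o \in O\} \ge 1$, and the Safety argument precisely excludes $t = n$, matching the $n > t$ entries of \Cref{tab:characterization}.
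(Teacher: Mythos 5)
Your proof is correct and takes essentially the same approach as the paper: the first inequality follows from Completeness applied to a maximum-cardinality output set (using that each process outputs at most one value), and the second from Safety instantiated at $f=t$ with a failure pattern crashing $t$ processes before any step, so at most $n-t$ distinct values can appear. The paper's own proof is simply a terser statement of this exact argument, and your added care about admissibility of the failure pattern in both timing models fills in detail the paper leaves implicit.
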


\begin{proof}
Let us consider a set of output sets $O$.
Condition $n \geq \max(\{|o| : o \in O\})$ is necessary for implementing $O$, otherwise there would not be enough processes to output all the values of the largest $o \in O$, violating completeness.
Condition $n-t \geq \min(\{|o| : o \in O\})$ is also necessary for implementing $O$.
Otherwise, when $f=t$, there would not be enough processes guaranteed to stay correct to output all the values of the smallest $o \in O$, violating safety.
\end{proof}

\begin{theorem} \label{thm:nec-min-corr}
Under any timing model ($\Async$ or $\Sync$), the condition $n-t \geq 2$ is necessary for implementing the set of output sets $O = \{\varnothing,\{0,1\}\}$.
\end{theorem}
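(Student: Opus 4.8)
The plan is to prove the contrapositive by an indistinguishability argument: assuming an algorithm $A$ implements $O=\{\varnothing,\{0,1\}\}$ under a configuration with $n-t \le 1$, I would derive a contradiction. Note first that \Cref{obs:nec-proc-corr} is too weak to yield this theorem: here $\min(\{|o|:o\in O\})=|\varnothing|=0$, so the observation only gives $n-t\ge 0$, and a dedicated argument is needed. By the completeness property there exists an execution $E$ of $A$ whose output set is $\{0,1\}$; in particular at least two distinct processes produce outputs in $E$. The whole strategy is to transform $E$ into a sibling execution $E'$ in which exactly one process outputs, so that the output set of $E'$ lies in $\{\{0\},\{1\}\}$, which is forbidden by $O$, contradicting safety.

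First I would isolate the earliest output in $E$. Since an execution is a sequence of steps (and, in $\Sync$, a sequence of lock-step rounds), there is a well-defined first moment at which some process outputs; let $p$ be a process that outputs at that moment and let $b\in\{0,1\}$ be the value it produces. By the choice of $p$, no process outputs strictly before $p$ does. I would then define $E'$ to coincide with $E$ up to and including $p$'s output and to crash every other process immediately thereafter. Concretely, in $\Async$, $E'$ replays all steps of $E$ through $p$'s $\ooutput$ step and then the failure pattern stops all other $n-1$ processes; in $\Sync$, letting $R$ be the first round containing an output, $E'$ is identical to $E$ through the communication step of round $R$ (so by C-Synchrony $p$ observes in round $R$ exactly what it observes in $E$), while all processes other than $p$ crash at the start of the computation step of round $R$. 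This failure pattern is legitimate because it crashes $n-1$ processes, and the hypothesis $n-t\le 1$ gives $t\ge n-1$, hence $n-1\le t$, so $\fp\in\FP(n,n-1)$. The lone surviving process $p$ is correct in $E'$, so C-Local-Termination and C-Global-Termination hold trivially (any information $p$ communicates, $p$ itself eventually observes), confirming that $E'$ is a valid execution for some $\cdp\in\CDP(\sigma)$.

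The decisive step is indistinguishability: $p$'s local view up to the instant it outputs is identical in $E$ and $E'$, since the two executions agree on every step and every observed piece of information up to that instant. As $A$ is deterministic, $p$ outputs the same value $b$ in $E'$. Furthermore, no process other than $p$ outputs in $E'$: by the choice of $p$ none of them output before $p$, and in $E'$ they are all crashed before their own (potential) output step. Hence the output vector $\Vout$ of $E'$ has $\SV(\Vout)=\{b\}\notin O$, violating safety and contradicting the assumption that $A$ implements $O$. Therefore $n-t\ge 2$ is necessary. I expect the main obstacle to be the careful handling of the first-output moment in $\Sync$, where several processes may output in the same round: the crashes must be placed after the communication step but before the computation step of that round, so that $p$'s observations are faithfully preserved while all competing outputs are suppressed.
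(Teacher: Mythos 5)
Your proof is correct and takes essentially the same approach as the paper: use completeness to obtain an execution producing $\{0,1\}$, identify the first process $p$ to output, crash the other $n-1$ processes (legitimate since $t \geq n-1$) immediately before they can output, and conclude by indistinguishability that the resulting output set $\{b\}$ violates safety. Your explicit placement of the crashes between the communication and computation steps in the $\Sync$ case is just a more detailed rendering of what the paper relegates to a footnote about simultaneous outputs.
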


\begin{proof}
We prove the lemma by contradiction: assume that there is an algorithm $A$ that can implement $O = \{\varnothing,\{0,1\}\}$ under $n-t < 2$.
This means that at least $n-1$ processes can crash during an execution.

Let us consider an execution $E_c$ of $A$ that produces output set $\{0,1\}$.
This execution must exist for completeness.
In particular, let us consider the first process $p \in P$ that outputs some value $v \in \{0,1\}$ in $E_c$ with respect to the global time, and let $\tau$ be the time when $p$ outputs $v$.\footnote{
  If multiple processes simultaneously output a value at time $\tau$, we pick any of these processes, as their respective outputs cannot have physically influenced each other.
}
Then, it is possible to craft an execution $E$ that is exactly the same as $E_c$ up to time $\tau$, at which time all processes except $p$ crash before outputting anything, and $p$ is correct.
In execution $E$, the output set is $\{v\}$, which violates safety: contradiction.
\end{proof}

\begin{theorem} \label{thm:nec-01}
Under $\Async$, condition $n > \frac{3}{2} t+1$ is necessary for implementing the sets of output sets $O = \{\{0,1\}\}$ and $O' = \{\varnothing, \{0,1\}\}$.
\end{theorem}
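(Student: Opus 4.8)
The plan is to argue by contradiction. Suppose some algorithm $A$ implements $O=\{\{0,1\}\}$ (or $O'=\{\varnothing,\{0,1\}\}$) under $\Async$ while $n\le\frac{3}{2}t+1$; I will exhibit an admissible execution whose output set is a singleton $\{0\}$ or $\{1\}$, violating safety. The governing inequality is best read as $2(n-1)\le 3t$, which says that $P$ is ``small enough'' to be covered by three overlapping groups, each of which is a legitimate set of correct processes while any two of them can be crashed together. The first building block is a solo-run lemma in the spirit of \Cref{thm:nec-min-corr}: any group $G$ with $|P\setminus G|\le t$ can be made to run in isolation (crash $P\setminus G$ at the outset), and then safety forces the surviving processes of $G$ to jointly output the set $\{0,1\}$. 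For $O$ this is immediate; for $O'$, where an isolated run may legitimately output $\varnothing$, I would instead take as a seed the $\{0,1\}$-producing execution guaranteed by completeness and transport its behaviour through the indistinguishability arguments below.

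The engine of the proof is asynchronous indistinguishability. Since processes are deterministic and local computation is instantaneous, the value a process outputs is a function of the information it has observed at the moment it decides. A process cannot tell a crashed neighbour from an arbitrarily slow one, and by C-Global-Termination a correct process is only required to observe communicated information \emph{eventually}, possibly well after it has already irrevocably output. This lets me schedule one execution in which several correct processes decide on mutually disjoint partial views---for instance, a process of one group deciding exactly as in that group's solo run because the messages of the other groups are delayed past its decision time.

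With these tools I would pin output values and then propagate them to force unanimity. Concretely, fix a group whose (possibly seeded) solo run yields $\{0,1\}$ and extract from it a witness $p^\ast$ that outputs a fixed value $v$ whenever its observed view equals that group. For each process $q$ of a second group I then build an execution in which $p^\ast$ (together with the value-$v$ part of its group) is kept correct and outputs $v$, every other member of the second group is crashed after seeding $q$'s view, and $q$ is the only remaining decider; safety then forces $q$ to output $1\xor v$. Letting $q$ range over the whole second group shows that this group's solo run is unanimous on $1\xor v$, contradicting the solo-run lemma. The delicate point is that every one of these executions must crash at most $t$ processes simultaneously: it is exactly the bookkeeping of ``one group pinning a value, a second being driven to unanimity, and the third available to absorb crashes'', together with the single distinguished witness $p^\ast$, that forces the three-way split and pins the bound at $n>\frac{3}{2}t+1$ (the additive constant reflecting the witness that must stay correct).

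The main obstacle I anticipate is that the requirement ``both $0$ and $1$ appear'' is monotone under combining independent sub-executions: if two isolated groups, each of size at least $n-t$, run side by side, each already exhibits both values, so no naive two-way partition can ever violate safety. The contradiction must therefore force genuine unanimity inside a \emph{single} group, which is why the construction has to propagate a single pinned value across an entire family of executions indexed by the members of a target group, rather than merely isolating two processes (the latter would only yield the weaker threshold $n>t+2$). The hardest case to control is that of ``balanced'' solo runs, in which neither output value is produced by a near-majority of the group, and the boundary instance $n=\frac{3}{2}t+1$, where the three groups cannot all simultaneously be minimal correct sets; handling these tightly is where the precise crash accounting, and the use of the completeness-seeded execution in the $O'$ case, must be carried out with care.
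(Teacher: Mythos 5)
Your overall setup is sound (contradiction, asynchronous indistinguishability, converting crashes into delays, and using a completeness-seeded $\{0,1\}$-execution to neutralize the $\varnothing$ option in the $O'$ case --- the paper does this last step the same way), but the core construction has two genuine gaps, both of which you flag yourself and neither of which your plan resolves. First, the three-group skeleton is combinatorially infeasible exactly where it is needed most: if three groups cover $P$, each has size at least $n-t$ (so that its complement is crashable), and every pairwise union is crashable (size at most $t$), then the three sets $P\setminus(G_i\cup G_j)$ are pairwise disjoint and each has size at least $n-t$, forcing $n\geq 3(n-t)$, i.e.\ $n\leq\frac{3}{2}t$. So no such covering exists in the range $\frac{3}{2}t<n\leq\frac{3}{2}t+1$, which is precisely the boundary the theorem must handle. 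Second, and more fundamentally, the per-process forcing step does not fit in the crash budget. To force $q$ to output $1\xor v$ you must guarantee that \emph{no other process ever outputs} $1\xor v$, and since outputs survive later crashes and there is no way to prevent a correct process from outputting, this means crashing, before their output steps, every process outside $G_1^v\cup\{q\}$ (where $G_1^v$ is the value-$v$ side of the pinning group's solo run). That needs $|G_1^v|\geq n-t-1$; but $v$ must be the value of the \emph{first} outputter of the solo run (otherwise its view cannot be replicated without letting some process output $1\xor v$ first), and nothing prevents the solo run from being balanced, with $|G_1^v|$ as small as roughly $|G_1|/2$. This is exactly the ``balanced solo run'' obstacle you name, and it is fatal to the approach as described, not merely delicate.

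The missing idea --- and the heart of the paper's proof of \Cref{thm:nec-01} --- is a chain construction that manufactures $t+1$ \emph{mutually independent} pinned decisions instead of relying on one group's interdependent solo run. Starting from a crash-free execution producing $\{0,1\}$, the paper lets $p_1$ be the first outputter, then iteratively builds executions $E_1,\dots,E_t$, where each $E_i$ crashes the would-be second outputter $p_i$ of $E_{i-1}$ just before its output; since safety keeps forcing a second output to appear, this yields $t+1$ processes $p_1,\dots,p_{t+1}$, each of whose decisions is taken on a view in which only $p_1$ has output. Un-crashing $p_1,\dots,p_t$ via message delays gives a crash-free execution in which all $t+1$ outputs are determined, and then one crashes the minority side of that output split (at most $\frac{t+1}{2}$ processes) together with all remaining processes ($n-t-1\leq\frac{t}{2}$); by integrality this totals at most $t$ crashes, and the survivors are unanimous, violating safety. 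Note how this sidesteps both of your obstacles: the balanced-split case is harmless because the minority-plus-remainder count works for \emph{any} split of the $t+1$ pinned values, and no covering by three large crashable groups is ever needed.
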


\begin{proof}
By contradiction, let us assume that there exists an algorithm $A$ that implements the set of output sets $\{\{0,1\}\}$ or $\{\varnothing,\{0,1\}\}$ under $\Async$ and $n \leq \frac{3}{2}t+1$.
Note that $n \geq 2$, from \Cref{obs:nec-proc-corr}, which implies that $t \geq 1$.

We first prove that, no matter the case (whether $A$ implements $\{\{0,1\}\}$ or $\{\varnothing,\{0,1\}\}$), $A$ must have a crash-free execution $E_0$ producing the output set $\{0,1\}$.
\begin{itemize}
    \item If $A$ implements the set of output sets $\{\{0,1\}\}$, then any crash-free execution of $A$ can only output set $\{0,1\}$.

    \item If $A$ implements the set of output sets $\{\varnothing,\{0,1\}\}$, then we consider any execution $E_c$ of $A$ that outputs set $\{0,1\}$ but that may present crashes.
    Let us denote by $P_d$ the set of processes that output any value in $E_c$, and by $P_c$ the set of processes that crash in $E_c$.
    We can construct an execution $E_0$ of $A$ that is the same as $E_c$, except that it has no crash, and all information communicated by processes of $P_c$ after the point where they would crash in $E_c$ is delayed due to asynchrony, such that it is only observed by processes of $P_d$ after they produce their outputs.
    Therefore, the outputs of the processes of $P_d$ cannot have been influenced by the fact that the processes in $P_c$ are correct, and $E_0$ must also produce the output set $\{0,1\}$.
\end{itemize}

We therefore refer to $E_0$ as a crash-free execution of $A$ producing the output set $\{0,1\}$.
Let $p_1 \in P$ be the first process that outputs some value $v_1 \in \{0,1\}$ in $E_0$ w.r.t. global time (at time $\tau_1$).
Let execution $E_1$ be an execution that is the same as $E_0$ up to $\tau_1$, but where $p_1$ crashes immediately after it outputs $v_1$.

We now construct a sequence of executions $E_2,...,E_t$ of $A$ inductively as follows.
Let us consider some $i = 2,...,t$.
To construct $E_i$, we assume by induction hypothesis that, in execution $E_{i-1}$ (starting with $E_1$), process $p_1$ outputs value $v_1$, therefore $E_{i-1}$ must necessarily produce the output set $\{0,1\}$ (it cannot produce the $\varnothing$ output set anymore).
Hence, there must be another process in the system that outputs the opposite value $1 \xor v_1$ in $E_{i-1}$.
In particular, there must be some process $p_i \in P$ which is the second one (after $p_1$) to output a value $v_i \in \{0,1\}$ in $E_{i-1}$ (w.r.t. global time) at time $\tau_i$.
We then make execution $E_i$ the same as $E_{i-1}$ until time $\tau_i$, but we make process $p_i$ crash just before it outputs a value, leaving $p_1$ be the only process that outputs a value in $E_i$ up to time $\tau_i$. 

When we reach execution $E_t$, there have been $t$ process crashes, so we cannot crash processes anymore.
Finally, we denote by $p_{t+1}$ the second process that outputs some value in $E_t$ at time $\tau_{t+1}$, w.r.t. global time.

We now create a new crash-free execution $E_\text{crash-free}$ that is the same as $E_t$, except that all processes $p_1,...,p_t$ do not crash, but all communication that they make after their respective output is delayed until $\tau_(t+1)$.
Then, for any process $p \in P \setminus \{p_1,...,p_t\}$, prior to $\tau_{t+1}$, $E_\text{crash-free}$ is indistinguishable from $E_t$ from the point of view of $p$.
In other words, in $E_\text{crash-free}$, the processes of $\{p_1,...,p_t\}$ are indistinguishable from crashed processes to the other processes, before $\tau_{t+1}$.

Let us consider the set of processes $\{p_1,...,p_{t+1}\}$ in $E_\text{crash-free}$.
We can divide this set into two partitions: the subset $P_0$ of processes that output $0$ in $E_\text{crash-free}$, and the subset $P_1$ of processes that output $1$ in $E_\text{crash-free}$.
Let us denote by $P_\text{min}$ (resp., $P_\text{maj}$) the smaller (resp., bigger) set between $P_0$ and $P_1$ (if they have the same size, then $P_\text{min}$ is $P_0$ and $P_\text{maj}$ is $P_1$).
Remark that $|P_\text{min}|+|P_\text{maj}| = t+1$ and $|P_\text{min}| \leq \frac{t+1}{2} \leq |P_\text{maj}|$.
We then denote by $P_?$ the subset of processes that are not in $P_\text{min}$ or $P_\text{maj}$: $P_? = P \setminus \{p_1,...,p_{t+1}\}$.
We have $|P_?| = n-t-1 \leq \frac{3}{2}t+1-t-1 = \frac{t}{2}$.
Moreover, we have $|P_\text{min}|+|P_?| \leq \frac{t+1}{2}+\frac{t}{2} = t+\frac{1}{2}$.
But as $|P_\text{min}|$, $|P_?|$, and $t$ are all integers, then we also have $|P_\text{min}|+|P_?| \leq t$.

\begin{figure}[t]
\centering
\begin{tikzpicture}
\colorlet{red}{red!70!black}
\tikzset{cross/.style={red, line width=.5pt}}
\tikzset{dashed_arrow/.style={dashed, -latex, line width=0.5pt}}
\tikzset{text_node/.style={align=center, font=\footnotesize}}
\tikzset{set_circle/.style={draw, line width=0.5pt}}

\coordinate (Pmin_pos) at (-3, 0);
\coordinate (Pquestion_pos) at (0, 0);
\coordinate (Pmaj_pos) at (3.2, 0);
\coordinate (out_v_pos) at (6.5, 0);

\draw (Pmin_pos) ellipse [x radius=1, y radius=0.5];
\node[text_node] at (Pmin_pos) {$P_{\text{min}}$\\${\le}\frac{t+1}{2}$ procs.};
\draw[cross] (-3.55, -0.55) -- (-2.45, 0.55);
\draw[cross] (-3.55, 0.55) -- (-2.45, -0.55);

\draw (Pquestion_pos) ellipse [x radius=1, y radius=0.5];
\node[text_node] at (Pquestion_pos) {$P_?$ \\ ${\le}\frac{t}{2}$ procs.};
\draw[cross] (-0.55, -0.55) -- (0.55, 0.55);
\draw[cross] (-0.55, 0.55) -- (0.55, -0.55);

\draw (Pmaj_pos) ellipse [x radius=1.1, y radius=0.5];
\node[text_node] (Pmaj) at (Pmaj_pos) {$P_{\text{maj}}$ \\ ${\ge}\frac{t+1}{2}$ procs.};

\node[text_node, draw=none, inner xsep=0.5em] (out-v) at (out_v_pos) {all output \\ same $v$};
\draw[dashed_arrow] (4.3, 0) -- (out-v);
\end{tikzpicture}

\caption{Execution $E_{\text{crash}}$, where processes of $P_{\text{min}}$ and $P_?$ crash, and only processes of $P_{\text{maj}}$ stay correct, which leads to a safety violation.}
\label{fig:nec-01}
\end{figure}
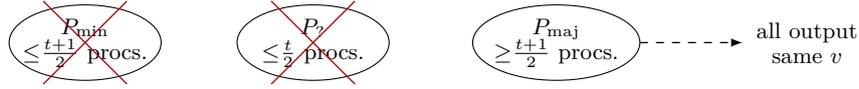

As illustrated in \Cref{fig:nec-01}, we can finally construct another execution $E_\text{crash}$, which is the same as $E_\text{crash-free}$ up to time $\tau_{t+1}$, except that we make all processes of $P_\text{min}$ and $P_?$ crash in $E_\text{crash}$ immediately before their output step (as $|P_\text{min}|+|P_?| \leq t$).
After time $\tau_{t+1}$, the only remaining correct processes in $E_\text{crash}$ are those of $P_\text{maj}$.
The processes of $P_\text{maj}$ all output the same value $v$ ($v=0$ if $P_\text{maj}=P_0$, $v=1$ if $P_\text{maj}=P_1$), thus producing the output set ${v}$.
Hence, safety is violated, and we have a contradiction.
\end{proof}

\section{Sufficiency: Implementing Algorithms} \label{sec:sufficiency}

In this section, we provide algorithms, along with their correctness proofs, implementing the binary sets of output sets of \Cref{tab:characterization}.
For modularity and conciseness, the algorithms of this paper feature \textit{instantiation parameters}, which allow each one of them to have several possible behaviours, and thus implement multiple sets of output sets in \Cref{tab:characterization}.

For the sake of presentation clarity, some algorithms and proofs used in \Cref{tab:characterization} are presented in detail in \Cref{apx:more-algos}.
This section focuses on the algorithms implementing lines 7 and 8 of \Cref{tab:characterization} (\Cref{alg:async-sym-break} for the $\Async$ case, \Cref{alg:sync-sym-break} for the $\Sync$ case), which we deem the most interesting ones in our classification.
We call them \textit{disagreement} algorithms, since they must always guarantee that the system never outputs only one single value ($0$ or $1$).

\subsection{Asynchronous Disagreement Algorithm (Lines 7-8)}

The algorithm of this section (\Cref{alg:async-sym-break}) addresses the $\Async$ case of lines 7 and 8 in \Cref{tab:characterization}.
It is instantiated by providing a Boolean $\nooutput \in \{\ttrue, \ffalse\}$ (\cref{line:async-sym-break:instantiation}), which determines if the empty output set $\varnothing$ can be produced or not.

At the initialization of the algorithm, the set of processes $P$ is partitioned into three subsets, $P_0,P_1,P_?$ (\cref{line:async-sym-break:AlgInitiation}), which respectively contain at least $\frac{t+1}{2}$, $\frac{t}{2}$, and $\frac{t+1}{2}$ processes.
By combining this with the system assumption, we obtain that the union of any two of these subsets contains at least one correct process.
Moreover, a set of at least $t+1$ processes $P_\text{init} \subset P$ is also selected (which is only used when $\nooutput=\ttrue$).

Every process $p_\text{init} \in P_\text{init}$ passes the condition at \cref{line:async-sym-break:initMsg} only if $\varnothing$ is not a forbidden output set ($\nooutput = \ttrue$) and it pseudo-randomly picks $0$.
If it does, $p_\text{init}$ communicate $\initi$.
Every process $p_v \in P_v, v \in \{0,1\}$ first checks if $\nooutput = \ttrue$ (\cref{line:async-sym-break:wait-init}).
If it does, $p_v$ waits until it observes some $\initi$ information.
If $p_v$ passes the previous line, it will then output $v$ (\cref{line:async-sym-break:out-v}) and communicate $\outputi(v)$ (\cref{line:async-sym-break:comm-v}).
Every process $p_? \in P_?$ first waits until it observes some $\outputi(v)$ (\cref{line:async-sym-break:wait-v}).
If it passes this $\wait$ statement, then $p_?$ outputs the opposite of $v$, namely $1 \xor v$ (\cref{line:async-sym-break:out-inv}).

Before proving the correctness of \Cref{alg:async-sym-break}, we first show some intermediary results.

\begin{algorithm}[tb]
\InstParam{Boolean $\nooutput \in \{\ttrue,\ffalse\}$.} \label{line:async-sym-break:instantiation}
\smallskip

\Init{pick three subsets of processes $P_0,P_1,P_? \subset P, |P_0| \geq \frac{t+1}{2},$ \nonl \\
  \hspace{1em} $|P_1| \geq \frac{t}{2}, |P_?| \geq \frac{t+1}{2}, P_0 \cup P_1 \cup P_? = P, P_0 \cap P_1 = P_1 \cap P_? = P_0 \cap P_? = \varnothing$; \nonl \\
  \hspace{1em} pick some set of processes $P_\text{init} \subset P, |P_\text{init}| \geq t+1$.} \label{line:async-sym-break:AlgInitiation}
\smallskip

\ProcCode{$p_\text{init} \in P_\text{init}$}{ \label{line:async-sym-break:startPinit}
    \lIf{$\nooutput = \ttrue$ \cAnd $\prpick(\{0,1\})=0$}{$\communicate$ $\initi$.} \label{line:async-sym-break:initMsg}
}
\smallskip

\ProcCode{$p_v \in P_v, v \in \{0,1\}$}{ \label{line:async-sym-break:startP01}
    \lIf{$\nooutput = \ttrue$}{$\wait$ until $p_v$ $\observed$ some $\initi$;} \label{line:async-sym-break:wait-init}
    $\ooutput$ $v$; \label{line:async-sym-break:out-v} \\
    $\communicate$ $\outputi(v)$. \label{line:async-sym-break:comm-v}
}

\ProcCode{$p_? \in P_?$}{ \label{line:async-sym-break:startPinv}
    $\wait$ until $p_?$ $\observed$ some $\outputi(v)$ for the first time; \label{line:async-sym-break:wait-v} \\
    $\ooutput$ $1 \xor v$. \label{line:async-sym-break:out-inv}
}

\caption{Disagreement asynchronous algorithm for the $\Async$ case of lines 7-8 of \Cref{tab:characterization}, assuming $\frac{3}{2}t+1 < n \geq 2$.}
\label{alg:async-sym-break}
\end{algorithm}

\begin{observation} \label{obs:some-init-corr}
Since $P_0$ and $P_1$ are disjoint, we have $|P_0 \cup P_1| \geq \frac{t+1}{2} + \frac{t}{2} = t+\frac{1}{2}$.
Moreover, as the left-hand side is an integer, then we also have $|P_0 \cup P_1| \geq t+1$, hence there is at least one correct process $p_v^c \in P_0 \cup P_1$.
\end{observation}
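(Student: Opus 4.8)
The plan is to combine the disjointness of $P_0$ and $P_1$ with the cardinality bounds fixed at initialization (\cref{line:async-sym-break:AlgInitiation}), and then invoke the crash bound $t$. First I would note that since $P_0 \cap P_1 = \varnothing$, the cardinality of the union is additive, so $|P_0 \cup P_1| = |P_0| + |P_1|$. Substituting the lower bounds $|P_0| \geq \frac{t+1}{2}$ and $|P_1| \geq \frac{t}{2}$ then gives $|P_0 \cup P_1| \geq \frac{t+1}{2} + \frac{t}{2} = t + \frac{1}{2}$.

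Next I would apply an integrality argument: because $|P_0 \cup P_1|$ counts processes, it must be a nonnegative integer, and for integer $t$ the smallest integer that is at least $t + \frac{1}{2}$ is $t+1$. Hence $|P_0 \cup P_1| \geq t+1$.

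Finally, I would appeal to the system assumption that at most $t$ processes crash in any execution. Since $P_0 \cup P_1$ contains at least $t+1$ processes, a pigeonhole argument guarantees that at least one of them is correct, \ie there exists some correct $p_v^c \in P_0 \cup P_1$, which is the desired conclusion.

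There is no genuine obstacle in this argument; the only step requiring a moment's care is the rounding, which crucially relies on $t$ being an integer so that $t + \frac{1}{2}$ is rounded \emph{up} to $t+1$ and not down to $t$. The subsequent combination with the crash bound is then immediate.
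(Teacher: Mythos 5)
Your proof is correct and follows exactly the paper's own argument: additivity of cardinality from disjointness, the lower bound $\frac{t+1}{2} + \frac{t}{2} = t+\frac{1}{2}$, rounding up to $t+1$ by integrality, and the crash bound $t$ to conclude that some process in $P_0 \cup P_1$ is correct. The explicit remark that the rounding direction relies on $t$ being an integer is a nice touch, but the approach is the same.
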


\begin{lemma}\label{lem:01-if-single-special}
If some process $p \in P_?$ outputs a value $v \in \{0,1\}$ at \cref{line:async-sym-break:out-inv}, then another process $p' \in P_{1 \xor v}$ also outputs the opposite value $1 \xor v$ at \cref{line:async-sym-break:out-v}.
\end{lemma}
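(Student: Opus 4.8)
The plan is to trace backwards the code executed by $p$ and invoke C-Validity to locate the process whose communication triggered $p$'s output. First I would note that, by the code of the processes in $P_?$, if $p$ outputs $v$ at \cref{line:async-sym-break:out-inv}, then $p$ must earlier have passed the $\wait$ statement at \cref{line:async-sym-break:wait-v} by observing some $\outputi(w)$, and subsequently output $1 \xor w$. Since $p$'s output value is $v$, this forces $v = 1 \xor w$, hence $w = 1 \xor v$ (as XOR with $1$ is its own inverse), so $p$ in fact observed $\outputi(1 \xor v)$.

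Next I would apply C-Validity: because $p$ observed $\outputi(1 \xor v)$, this information must have been previously communicated by some process $p'$. Inspecting the algorithm, the only line that communicates an $\outputi(\cdot)$ information is \cref{line:async-sym-break:comm-v}, executed exclusively by the processes in $P_0 \cup P_1$, where each $p_{v'} \in P_{v'}$ communicates precisely $\outputi(v')$. Therefore the process $p'$ that communicated $\outputi(1 \xor v)$ must belong to $P_{1 \xor v}$. Finally, I would observe that any such $p'$ reaches \cref{line:async-sym-break:comm-v} only after first executing \cref{line:async-sym-break:out-v}, where it outputs its own value $1 \xor v$. This yields the desired process $p' \in P_{1 \xor v}$ outputting $1 \xor v$ at \cref{line:async-sym-break:out-v}.

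The argument is a single-directional causal trace, so I do not expect any substantive obstacle. The only points requiring care are the XOR bookkeeping (confirming $w = 1 \xor v$ so that the observed value is indeed the complement of $p$'s output) and verifying that \cref{line:async-sym-break:comm-v} is the \emph{unique} source of $\outputi$ information in the algorithm, which is what lets C-Validity pin down both the existence of $p'$ and its subset membership in $P_{1 \xor v}$.
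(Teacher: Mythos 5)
Your proof is correct and follows essentially the same route as the paper's: trace $p$'s output back through the $\wait$ at \cref{line:async-sym-break:wait-v} to an observed $\outputi(1 \xor v)$, then apply C-Validity to pin down a communicator $p' \in P_{1 \xor v}$ that must have executed \cref{line:async-sym-break:out-v} before \cref{line:async-sym-break:comm-v}. Your version merely makes the XOR bookkeeping and the uniqueness of the $\outputi$ source explicit, which the paper leaves implicit.
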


\begin{proof}
If some $p \in P_?$ outputs $v$ at \cref{line:async-sym-break:out-inv}, then it must have observed some $\outputi(1 \xor v)$ at \cref{line:async-sym-break:wait-v}.  
By C-Validity, this $\outputi(1 \xor v)$ was communicated by some process $p' \in P_{1 \xor v}$ at \cref{line:async-sym-break:comm-v}, after it had output $1 \xor v$ at \cref{line:async-sym-break:out-v}.
\end{proof}

\begin{lemma} \label{lem:corr-pass-wait}
If a process $p \in P_0 \cup P_1$ outputs some value $v \in \{0,1\}$ at \cref{line:async-sym-break:out-v}, then all correct processes in $P_0 \cup P_1$ pass \cref{line:async-sym-break:wait-init}.
\end{lemma}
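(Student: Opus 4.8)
The plan is to distinguish the two instantiations of \Cref{alg:async-sym-break} according to the value of the parameter $\nooutput$, since \cref{line:async-sym-break:wait-init} only imposes a $\wait$ when $\nooutput = \ttrue$. This case split is the natural skeleton because the guard at that line is exactly the condition $\nooutput = \ttrue$.

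First I would dispatch the easy case $\nooutput = \ffalse$. Here the guard $\nooutput = \ttrue$ at \cref{line:async-sym-break:wait-init} is false, so the $\wait$ is never executed by any process of $P_0 \cup P_1$. Consequently every correct process of $P_0 \cup P_1$ passes \cref{line:async-sym-break:wait-init} unconditionally, regardless of the hypothesis. This case needs no communication reasoning at all.

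The substantive case is $\nooutput = \ttrue$. Here I would argue directly from the hypothesis: if some $p \in P_0 \cup P_1$ reaches \cref{line:async-sym-break:out-v} and outputs $v$, then $p$ must first have passed the $\wait$ of \cref{line:async-sym-break:wait-init}, and since $\nooutput = \ttrue$ this means $p$ $\observed$ some $\initi$ information. I would then invoke the C-Global-Termination property on this observed $\initi$: since $p$ observes it, all correct processes eventually observe it as well. In particular, every correct process of $P_0 \cup P_1$ eventually observes this same $\initi$ and thereby passes its own $\wait$ at \cref{line:async-sym-break:wait-init}, which is exactly the claim.

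I expect no real obstacle, as the lemma is short and leans entirely on C-Global-Termination. The only point worth stating carefully is that C-Global-Termination propagates the very same $\initi$ information that unblocked $p$, and that the $\wait$ conditions of the other processes merely require observing \emph{some} $\initi$; hence that single propagated piece of information suffices to release all correct processes of $P_0 \cup P_1$.
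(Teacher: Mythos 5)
Your proof is correct and takes essentially the same route as the paper's: the same case split on $\nooutput$, with the $\ffalse$ case dismissed as immediate, and the $\ttrue$ case arguing that $p$'s arrival at \cref{line:async-sym-break:out-v} implies it passed the $\wait$ and observed some $\initi$, which C-Global-Termination then propagates to every correct process of $P_0 \cup P_1$. Your closing remark that a single propagated $\initi$ suffices to release all waiting processes is exactly the implicit step in the paper's argument, made explicit.
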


\begin{proof}
Let us assume that some process $p \in P_0 \cup P_1$ outputs a value $v \in \{0,1\}$.
We now show that all correct processes in $P_0 \cup P_1$ pass \cref{line:async-sym-break:wait-init}.
\begin{itemize}
  \item If $\nooutput = \ffalse$, then this is immediate.  
  \item If $\nooutput = \ttrue$, then all correct processes in $P_0 \cup P_1$ reach the $\wait$ statement at \cref{line:async-sym-break:wait-init}.
  However, since $p$ reached \cref{line:async-sym-break:out-v}, it must have passed that $\wait$, thus observing some $\initi$.  
  By C-Global-termination, all correct processes in $P_0 \cup P_1$ also observe $\initi$ and therefore pass \cref{line:async-sym-break:wait-init}. \qedhere
\end{itemize}
\end{proof}

\begin{lemma} \label{lem:async-01-if-single}
If some process $p \in P$ outputs some value $v \in \{0,1\}$, then another process $p' \in P \setminus \{p\}$ also outputs the opposite value $1 \xor v$.
\end{lemma}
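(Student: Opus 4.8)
The plan is to argue by a case analysis on which block of the partition $P_0, P_1, P_?$ contains $p$, combining the three auxiliary facts just proved with the communication properties. The structural engine of the whole argument is the counting fact behind \Cref{obs:some-init-corr}: since the three blocks are disjoint with sizes ${\geq}\frac{t+1}{2}$, ${\geq}\frac{t}{2}$, ${\geq}\frac{t+1}{2}$, the union of any two of them has at least $t+1$ elements, and therefore always contains a correct process.

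First I would dispose of the case $p \in P_?$. Here $p$ outputs $v$ at \cref{line:async-sym-break:out-inv}, so \Cref{lem:01-if-single-special} immediately yields a process $p' \in P_{1 \xor v}$ outputting $1 \xor v$, with $p' \neq p$ by disjointness of the blocks.

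The real work is the case $p \in P_v \subseteq P_0 \cup P_1$, where $p$ outputs $v$ at \cref{line:async-sym-break:out-v}. I would apply the counting fact to $P_{1 \xor v} \cup P_?$ to obtain a correct process $c$ in this union, and split on where it sits. If $c \in P_{1 \xor v}$, then since $p$ has already output, \Cref{lem:corr-pass-wait} makes every correct process of $P_0 \cup P_1$ clear the wait at \cref{line:async-sym-break:wait-init}, so $c$ goes on to output $1 \xor v$ at \cref{line:async-sym-break:out-v}, and $c \neq p$ by disjointness.

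The delicate sub-case, which I expect to be the main obstacle, is $c \in P_?$ --- precisely the situation where $P_{1 \xor v}$ may have crashed entirely, removing the obvious source of the value $1 \xor v$. I would first guarantee that $c$ actually receives an output message: \Cref{obs:some-init-corr} supplies a correct process $r \in P_0 \cup P_1$, which by \Cref{lem:corr-pass-wait} outputs its value and communicates $\outputi(\cdot)$ at \cref{line:async-sym-break:comm-v}; C-Local-Termination followed by C-Global-Termination then carry this information to $c$, so $c$ eventually passes \cref{line:async-sym-break:wait-v} and outputs $1 \xor w$, where $\outputi(w)$ is the first output information it observes. The subtlety is that asynchrony may let $c$ observe $\outputi(1 \xor v)$ before $\outputi(v)$, so $w$ is not under our control; I resolve this by inspecting both values. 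If $w = v$, then $c$ itself outputs $1 \xor v$. If instead $w = 1 \xor v$, then --- using that only processes of $P_0 \cup P_1$ ever communicate output information, whereas $P_?$ processes never do --- C-Validity forces some process of $P_{1 \xor v}$ to have output $1 \xor v$ at \cref{line:async-sym-break:out-v}. In either branch a process other than $p$ outputs $1 \xor v$, which closes the case and the proof.
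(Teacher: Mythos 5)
Your proof is correct, and it reaches the conclusion by a genuinely different decomposition of the main case. After the common first step (if $p \in P_?$, invoke \Cref{lem:01-if-single-special}; otherwise apply \Cref{lem:corr-pass-wait}), the paper splits on whether $P_?$ contains a correct process: if it does, the paper routes an $\outputi(\star)$ to it (as you do) and applies \Cref{lem:01-if-single-special}; if instead \emph{all} of $P_?$ crashed, the paper runs a second counting argument, with an even/odd analysis of $t$, showing that wiping out $P_?$ consumes more than half the crash budget, so both $P_0$ and $P_1$ retain correct processes that output $0$ and $1$. You avoid that second case entirely: you apply the two-block counting fact to $P_{1 \xor v} \cup P_?$ to extract one correct witness $c$ and split on where $c$ sits, which directly targets the only two possible sources of $1 \xor v$. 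Two remarks. First, the counting fact you rely on (any two of the three blocks union to at least $t+1$ processes) is asserted in the paper's prose but \Cref{obs:some-init-corr} formally proves it only for $P_0 \cup P_1$; your justification via disjointness and integrality is therefore a needed step, and it is valid, since $|P_1|+|P_?| \geq \frac{t}{2}+\frac{t+1}{2} = t+\frac{1}{2}$ forces at least $t+1$, while $|P_0|+|P_?| \geq t+1$ holds outright. Second, in your sub-case $c \in P_?$, the explicit two-branch analysis of the first observed value $w$ (using that only $P_0 \cup P_1$ processes ever communicate $\outputi$ information, plus C-Validity) re-derives exactly \Cref{lem:01-if-single-special} applied to $c$; citing that lemma, as the paper does, would shorten this step. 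On balance, your route is shorter and dispenses with the paper's delicate parity computation, while the paper's all-of-$P_?$-crashed case yields the marginally stronger fact that both values are then output by \emph{correct} processes of $P_0$ and $P_1$.
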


\begin{proof}
Let us assume that some process $p \in P$ outputs a value $v \in \{0,1\}$.
If $p \in P_?$, then \Cref{lem:01-if-single-special} states that the execution must produce the output set $\{0,1\}$ and we are done.
Otherwise, if $p \in P_0 \cup P_1$, then \Cref{lem:corr-pass-wait} applies, and all correct processes in $P_0 \cup P_1$ pass \cref{line:async-sym-break:wait-init}.
We consider the following two opposite cases.
\begin{enumerate}
    \item Case \textit{(i)}: At least one correct $p' \in P_?$ exists.
    As all correct processes in $P_0 \cup P_1$ pass \cref{line:async-sym-break:wait-init}, then, in particular, correct process $p_v^c \in P_0 \cup P_1$ (\Cref{obs:some-init-corr}) reaches \cref{line:async-sym-break:comm-v} and communicates some $\outputi(\star)$.
    By C-Local-termination and C-Global-termination, $p'$ will eventually observe this $\outputi(\star)$ information, and will therefore pass the $\wait$ statement at \cref{line:async-sym-break:wait-v} (either after it observed the $\outputi(\star)$ communicated by $p_v^c$, or because it observed some prior $\outputi(\star)$).
    Finally, $p'$ will reach \cref{line:async-sym-break:out-inv} and output some value $v' \in \{0,1\}$.
    \Cref{lem:01-if-single-special} therefore applies, and some other process $p'' \in P_{1 \xor v'}$ also outputs the opposite value $1 \xor v$.
    
    \item Case \textit{(ii)}: All processes in $P_?$ crash.
    In this case, at least $\lceil\frac{t+1}{2}\rceil$ crashes have already occurred on all processes of $P_?$, and the maximum number of remaining crashes is $t - \lceil\frac{t+1}{2}\rceil$.
    \begin{itemize}
        \item If $t$ is even, then $\lceil\frac{t+1}{2}\rceil=\frac{t+2}{2}=\frac{t}{2}+1$ and $t-\lceil\frac{t+1}{2}\rceil = t - \frac{t}{2} - 1 = \frac{t}{2}-1 < \frac{t}{2}$.
        \item If $t$ is odd, then $\lceil\frac{t+1}{2}\rceil = \frac{t+1}{2}$ and $t - \lceil\frac{t+1}{2}\rceil = t - \frac{t+1}{2} = \frac{t-1}{2} < \frac{t}{2}$.
    \end{itemize}
    Therefore, it is not possible to also crash all processes of $P_0$ or $P_1$, since they both contain at least $\frac{t}{2}$ processes.
    This implies that there are at least two correct processes $p_0 \in P_0,p_1 \in P_1$.
    Since all correct processes in $P_0 \cup P_1$ pass \cref{line:async-sym-break:wait-init} (\Cref{lem:corr-pass-wait}), then $p_0$ and $p_1$ both reach \cref{line:async-sym-break:out-v}, and respectively output $0$ and $1$.
\end{enumerate}
Therefore, in every case, if some value $v \in \{0,1\}$ is output by a process $p \in P$, the opposite value $1 \xor v$ is also output by another process $p' \in P \setminus \{p\}$.
\end{proof}

\begin{theorem} \label{thm:async-sym-break}
Under $\Async$ and $\frac{3}{2}t + 1 < n \geq 2$, \Cref{alg:async-sym-break} instantiated with Boolean $\nooutput \in \{\ttrue,\ffalse\}$ implements the following sets of output sets:
$$ O = \begin{cases}
  \{\{0,1\}\} & \text{if } \nooutput = \ffalse, \\
  \{\varnothing, \{0,1\}\} & \text{if } \nooutput = \ttrue.
\end{cases} $$
\end{theorem}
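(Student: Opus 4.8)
The plan is to establish separately the two requirements of the implementation definition, \emph{Safety} and \emph{Completeness}, for each value of $\nooutput$, since both are free instantiation parameters. The whole safety argument rests on \Cref{lem:async-01-if-single}, which already guarantees that whenever any process outputs a value, some other process outputs the opposite value. This immediately rules out the output sets $\{0\}$ and $\{1\}$, so every execution must produce either $\varnothing$ (when no process outputs) or $\{0,1\}$ (when both values appear). For $\nooutput = \ttrue$ this is exactly $O = \{\varnothing,\{0,1\}\}$, so safety is complete. For $\nooutput = \ffalse$ I additionally need to exclude the $\varnothing$ case: by \Cref{obs:some-init-corr} we have $|P_0 \cup P_1| \geq t+1$, so there is a correct process $p_v^c \in P_0 \cup P_1$, and when $\nooutput = \ffalse$ the guard at \cref{line:async-sym-break:wait-init} is skipped, so $p_v^c$ reaches \cref{line:async-sym-break:out-v} and outputs. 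Hence at least one value is always output, and \Cref{lem:async-01-if-single} forces the output set to be $\{0,1\}$, matching $O = \{\{0,1\}\}$.

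For completeness I would exhibit, for each $o \in O$, an execution whose output set is exactly $o$, choosing the seed, failure pattern, and delay pattern freely. When $\nooutput = \ffalse$ the safety argument above in fact shows that \emph{every} execution yields $\{0,1\}$, so any single execution witnesses the one required output set. When $\nooutput = \ttrue$ I need two witnesses. For the witness $\{0,1\}$, I would take a crash-free execution under a seed for which some $p_\text{init} \in P_\text{init}$ satisfies $\prpick(\{0,1\}) = 0$ at \cref{line:async-sym-break:initMsg}; then $\initi$ is communicated and, the execution being crash-free, observed by all processes, so the processes of $P_0 \cup P_1$ pass \cref{line:async-sym-break:wait-init} and output, and \Cref{lem:async-01-if-single} again yields $\{0,1\}$. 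For the witness $\varnothing$, I would pick a seed for which every $p_\text{init} \in P_\text{init}$ draws $1$; then no $\initi$ is ever communicated, so by C-Validity no process of $P_0 \cup P_1$ passes the $\wait$ at \cref{line:async-sym-break:wait-init}, hence no $\outputi(\star)$ is ever communicated and, again by C-Validity, no process of $P_?$ passes \cref{line:async-sym-break:wait-v}; thus no process outputs and the output set is $\varnothing$.

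The main obstacle I anticipate is the completeness argument for $\nooutput = \ttrue$, specifically justifying the existence of the two required seeds, since the algorithm's entire nondeterminism is funneled through the $\prpick(\{0,1\})$ calls of $P_\text{init}$. I would lean on the PRNG assumption from \Cref{sec:formal-pb}, namely that the drawn values vary as the seed changes across executions, to secure both a seed realizing ``every process of $P_\text{init}$ draws $1$'' (for the $\varnothing$ witness) and a seed realizing ``some process of $P_\text{init}$ draws $0$'' (for the $\{0,1\}$ witness). Everything else is a direct assembly of the previously established lemmas, with C-Validity carrying the ``nothing was ever communicated, hence nothing is ever observed'' chain that underlies the empty-output witness.
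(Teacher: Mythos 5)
Your safety and completeness arguments match the paper's proof essentially step for step: both hinge on \Cref{lem:async-01-if-single} to exclude $\{0\}$ and $\{1\}$, on \Cref{obs:some-init-corr} to obtain a correct process of $P_0 \cup P_1$ that must output when $\nooutput=\ffalse$, and on the same two witness executions (some $p_\text{init}$ picking $0$, resp.\ all of $P_\text{init}$ picking $1$ or crashing) for the $\nooutput=\ttrue$ case. The one thing you omit, and which the paper spends the first third of its proof on, is the instantiability of the algorithm itself: \cref{line:async-sym-break:AlgInitiation} demands three \emph{disjoint} sets with $|P_0|\geq\frac{t+1}{2}$, $|P_1|\geq\frac{t}{2}$, $|P_?|\geq\frac{t+1}{2}$ whose union is $P$, plus $|P_\text{init}|\geq t+1$, and one must verify that $n > \frac{3}{2}t+1$ actually supplies enough processes. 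Because these lower bounds are non-integral, this takes a short parity argument (the paper splits into $t$ even and $t$ odd and exhibits concrete integer set sizes that fit within $n$); all of your lemma invocations silently presuppose that such a partition exists, and this feasibility check is precisely where the theorem's hypothesis on $n$ is cashed out, so it cannot be skipped. Aside from that missing (routine but necessary) step, your proof is correct and is the paper's proof.
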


\begin{proof}
We first show that condition $\frac{3}{2}t + 1 < n \geq 2$ is sufficient to construct the sets of processes used in \Cref{alg:async-sym-break}, namely $P_0,P_1,P_?,P_\text{init}$.
\begin{itemize}
    \item If $t$ is even, then $\frac{3}{2}t$ is an integer and $n > \frac{3}{2}t + 1$ implies $n \geq \frac{3}{2}t + 2$.
    Also, $\frac{t}{2}$ is an integer.
    Hence, there are enough processes to set $|P_0| \geq \frac{t}{2} + 1$, $|P_1| \geq \frac{t}{2}$, and $|P_?| \geq \frac{t}{2} + 1$, thus satisfying the conditions of \cref{line:async-sym-break:AlgInitiation}.  
    This implies that $|P_0| + |P_1| + |P_?| = \frac{3}{2}t + 2$, as desired.
    
    \item If $t$ is odd, then $\frac{3t-1}{2}$ is an integer and $\frac{3}{2}t=\frac{3t-1}{2} + \frac{1}{2}$. Then $n > \frac{3}{2}t + 1$ implies $n \geq \frac{3t-1}{2} + 2$.
    Also, $\frac{t-1}{2}$ is an integer.  
    Hence, there are enough processes to set $|P_0| \geq \frac{t-1}{2} + 1$, $|P_1| \geq \frac{t+1}{2}$, and $|P_?| \geq \frac{t-1}{2} + 1$, thus satisfying the conditions of \cref{line:async-sym-break:AlgInitiation}.  
    This implies that $|P_0| + |P_1| + |P_?| = \frac{3t-1}{2} + 2$, as desired.
\end{itemize}
Furthermore, since we have $n > \frac{3}{2}t + 1 \geq t+1$ and $|P_\text{init}| \geq t+1$, then there are also enough processes to construct the set $P_\text{init}$.  
Note also that $n \geq 2$ is already implied by $n > \frac{3}{2}t + 1$.
We consider the following two opposite cases.  
\begin{itemize}
  \item Case 1: $\nooutput = \ffalse$. We must show $O = \{\{0,1\}\}$.  
  \item Case 2: $\nooutput = \ttrue$. We must show $O = \{\varnothing, \{0,1\}\}$.
\end{itemize}
We now prove that the following safety and completeness results hold for both Cases 1 and 2.
\begin{enumerate}
    \item General-Safety: Impossibility of $\{0\}$ and $\{1\}$ in Cases 1 and 2.
    By corollary of \Cref{lem:async-01-if-single}, the output sets $\{0\}$ and $\{1\}$ cannot happen.

    \item General-Completeness: Possibility of $\{0,1\}$ in Cases 1 and 2.
    Let us first show that there is an execution where the correct process $p_v^c \in P_0 \cup P_1$ (\Cref{obs:some-init-corr}) passes \cref{line:async-sym-break:wait-init}.
    \begin{itemize}
        \item In Case 1, we have $\nooutput=\ffalse$, so $p_v^c$ passes this line.

        \item In Case 2, we have $\nooutput=\ttrue$, so $p_v^c$ reaches the $\wait$ statement at \cref{line:async-sym-break:wait-init}.
        However, since we have $|P_\text{init}| \geq t+1$, there is at least one correct process $p_\text{init}^c \in P_\text{init}$, and there exists an execution where $p_\text{init}^c$ picks $0$ and then communicates $\initi$ at \cref{line:async-sym-break:initMsg}, thus unlocking $p_v^c$ at \cref{line:async-sym-break:wait-init}.
    \end{itemize}
    Therefore, there is an execution where $p_v^c$ passes \cref{line:async-sym-break:wait-init}, and then outputs some value $v \in \{0,1\}$ at \cref{line:async-sym-break:out-v}.
    By \Cref{lem:async-01-if-single}, another process also outputs the opposite value $1 \xor v$, thus producing the output set $\{0,1\}$.
\end{enumerate}

The safety of Case 2 follows from General-Safety.
For Case 1, we also have to prove the impossibility of $\varnothing$.
Since $\nooutput=\ffalse$ in Case 1, correct process $p_v^c$ (\Cref{obs:some-init-corr}) passes the condition at \cref{line:async-sym-break:wait-init} and outputs some $v \in \{0,1\}$ at \cref{line:async-sym-break:out-v}, and thus $\varnothing$ is impossible. 

The completeness of Case 1 follows from General-Completeness.
For Case 2, we also have to prove the possibility of $\varnothing$.
Since $\nooutput=\ttrue$ in Case 2, there is an execution where no process from $P_\text{init}$ communicates $\initi$ at \cref{line:async-sym-break:initMsg} (because all processes in $P_\text{init}$ either crashed or picked $1$ at \cref{line:async-sym-break:initMsg}).
In this execution, no process in $P_0 \cup P_1$ passes the $\wait$ statement at \cref{line:async-sym-break:wait-init}, and no process in $P_?$ passes the $\wait$ instruction at \cref{line:async-sym-break:wait-v}.
Therefore, no process outputs any value in this execution, thus producing output set $\varnothing$.
\end{proof}

\subsection{Synchronous Disagreement Algorithm (Lines 7-8)}

The algorithm of this section (\Cref{alg:sync-sym-break}) addresses the $\Sync$ case of lines 7 and 8 in \Cref{tab:characterization}.
It is instantiated by providing a Boolean $\nooutput \in \{\ttrue, \ffalse\}$ (\cref{line:sync-sym-break:instantiation}), which determines if the empty $\varnothing$ output set can be produced or not.

At the initialization of the algorithm, all processes of $P$ are divided into two totally-ordered sequences $S_0$ and $S_1$, of $\lfloor\frac{n}{2}\rfloor$ and $\lceil\frac{n}{2}\rceil$ processes respectively, such that $|S_0|+|S_1|=n$.
Informally, the value $v \in \{0,1\}$ of a sequence $S_v$ corresponds to the \textit{default value} that the processes of $S_v$ output if they did not observe that another process already output this value before.
The index $i$ of every process $p_i^v$ in a sequence $S_v$ determines the round numbers in which $p_i^v$ will output a value (round $i$) and communicate information (round $i+1$).
Moreover, a set of at least $t+1$ processes $P_\text{init} \subset P$ is also selected (which is only used when $\nooutput=\ttrue$).
Each process $p_i^v \in S_v$ also has a local variable $v_\text{chosen}$, initialized to the sentinel value $\bot$, which stores the output value of $p_i^v$ between round $i$ and $i+1$. 

Every process $p_\text{init} \in P_\text{init}$ passes the condition at \cref{line:sync-sym-break:comm-init} only if $\varnothing$ is not a forbidden output set ($\nooutput = \ttrue$) and it pseudo-randomly picks $0$.
If it does, $p_\text{init}$ communicate $\initi$.
Every process $p_i^v \in S_v$ ($v \in \{0,1\}$) only performs actions during synchronous rounds $i$ and $i+1$ (if $i < \lceil\frac{n}{2}\rceil$), because of the conditions at \cref{line:sync-sym-break:comm-v,line:sync-sym-break:cond}.

During synchronous round $i$, $p_i^v$ does not pass the condition at \cref{line:sync-sym-break:comm-v} during the communication step, but it can enter the condition at \cref{line:sync-sym-break:cond}, if $\nooutput=\ffalse$ or if it observed some $\initi$.
If it enters the condition, then it updates its $v_\text{chosen}$ local variable: $v_\text{chosen}$ stores the opposite of $p_i^v$'s default value $v$ if it observed that another process in the system already output its $v$, otherwise $v_\text{chosen}$ stores $v$ (\cref{line:sync-sym-break:choose-v}).
Finally, $p_i^v$ outputs $v_\text{chosen}$ (\cref{line:sync-sym-break:out}).

During synchronous round $i+1$ (if $i < \lceil\frac{n}{2}\rceil$), $p_i^v$ passes the condition at \cref{line:sync-sym-break:comm-v} if it has output a value in the previous round (\ie $v_\text{chosen} \neq \bot$).
If it does, it communicates its output value through some $\outputi(v_\text{chosen})$ information (\cref{line:sync-sym-break:comm-v}).

Before proving the correctness of \Cref{alg:sync-sym-break}, we first show some intermediary results.

\begin{algorithm}[t]
\InstParam{Boolean $\nooutput \in \{\ttrue,\ffalse\}$.} \label{line:sync-sym-break:instantiation}
\smallskip

\Init{pick 2 distinct sequences of processes $S_0 = (p_1^0,...,p_{\lfloor\frac{n}{2}\rfloor}^0) \in P^{\lfloor\frac{n}{2}\rfloor}$, \nonl \\
    \hspace{1em} $S_0 = (p_1^1,...,p_{\lceil\frac{n}{2}\rceil}^1) \in P^{\lceil\frac{n}{2}\rceil}$ s.t $\forall p \in P: (p \in S_0 \land p \notin S_1) \lor (p \notin S_0 \land p \in S_1)$; \nonl \\
    \hspace{1em} pick some set of processes $P_\text{init} \subset P, |P_\text{init}| \geq t+1$.} \label{line:sync-sym-break:init}
\smallskip

\LocVar{$p_i^v \in S_v, v \in \{0,1\}$\textbf{:} value $v_\text{chosen} \gets \bot$.}
\smallskip

\ProcCode{$p_\text{init} \in P_\text{init}$ \textbf{at synchronous round} $R=1$}{
    \lIf{$\nooutput=\ttrue$ \cAnd $\prpick(\{0,1\})=0$}{$\communicate$ $\initi$.} \label{line:sync-sym-break:comm-init}
}
\smallskip

\ProcCode{$p_v \in S_v, v \in \{0,1\}$ \textbf{at synchronous round} $R \in [1..\lceil\frac{n}{2}\rceil]$}{
    \commStep
    \lIf{$R = i+1$ \cAnd $p_i^v$ has output some value}{$\communicate$ $\outputi(v_\text{chosen})$.} \label{line:sync-sym-break:comm-v}
    \compStep
    \If{$R = i$ \cAnd ($\nooutput=\ffalse$ \cOr $p_i^v$ $\observed$ $\initi$)}{ \label{line:sync-sym-break:cond}
        $v_\text{chosen} \gets 1 \xor v$ \textbf{if} $p_i^v$ $\observed$ some $\outputi(v)$ \textbf{else} $v$; \label{line:sync-sym-break:choose-v} \\
        $\ooutput$ $v_\text{chosen}$. \label{line:sync-sym-break:out}
    }
}
\caption{Disagreement synchronous algorithm for the $\Sync$ case of lines 7-8 of \Cref{tab:characterization}, assuming $t+2 \leq n \geq 2$.}
\label{alg:sync-sym-break}
\end{algorithm}

\begin{lemma} \label{lem:2-corr-in-S}
There must be two distinct correct processes $p_i^v \in S_v, p_j^{v'} \in S_{v'}$ for $v,v' \in \{0,1\}$ and $i,j \in [1..\lceil\frac{n}{2}\rceil]$.
\end{lemma}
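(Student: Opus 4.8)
The plan is to prove this by a direct counting argument on the number of correct processes, using only the system assumption $n \geq t+2$ together with the fact that $S_0$ and $S_1$ partition $P$. First I would observe that, by definition of the crash model, at most $t$ processes crash in any execution, so at least $n - t$ processes remain correct. Since the algorithm assumes $t + 2 \leq n$, we get $n - t \geq 2$, which immediately guarantees the existence of (at least) two distinct correct processes in $P$.

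Next I would translate this existence statement into the indexed form required by the lemma. Because the initialization step partitions $P$ into the two sequences $S_0$ and $S_1$ (recall that $\forall p \in P$, $p$ lies in exactly one of $S_0, S_1$, and $|S_0| + |S_1| = n$), every process of $P$ — and in particular each of the two correct processes — is of the form $p_i^v$ for some $v \in \{0,1\}$ and some index $i$ in the range $[1 \mathinner{..} |S_v|]$. It then remains only to check the claimed index bound: since $|S_0| = \lfloor \tfrac{n}{2} \rfloor \leq \lceil \tfrac{n}{2} \rceil$ and $|S_1| = \lceil \tfrac{n}{2} \rceil$, any valid index satisfies $i \in [1 \mathinner{..} \lceil \tfrac{n}{2} \rceil]$, as desired. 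The two correct processes, being distinct elements of $P = S_0 \cup S_1$, therefore yield the two distinct correct processes $p_i^v \in S_v$ and $p_j^{v'} \in S_{v'}$ with $v, v' \in \{0,1\}$ and $i, j \in [1 \mathinner{..} \lceil \tfrac{n}{2} \rceil]$ claimed in the statement.

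I do not expect a genuine obstacle here: the result is essentially a pigeonhole/counting observation, and the only point requiring a moment of care is confirming that the index of every process stays within $[1 \mathinner{..} \lceil \tfrac{n}{2} \rceil]$, which follows at once from the sizes of the two sequences. Note in particular that the argument does not need the two correct processes to lie in different sequences — the indices $v$ and $v'$ may coincide — so no further case analysis on which sequence each correct process belongs to is required.
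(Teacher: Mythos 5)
Your proof is correct and follows essentially the same route as the paper's: invoke $n \geq t+2$ to obtain two distinct correct processes, then use the fact that $S_0$ and $S_1$ partition $P$ to write each as $p_i^v$ with index at most $\lceil\frac{n}{2}\rceil$. The only difference is that you spell out the index-bound check via $|S_0| = \lfloor\frac{n}{2}\rfloor \leq \lceil\frac{n}{2}\rceil = |S_1|$, which the paper leaves implicit.
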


\begin{proof}
Since $n \geq t+2$, then there are at least two correct processes $p,p' \in P$.
Moreover, by definition, all processes $p'' \in P$ belong to $S_0$ or $S_1$, so $p,p'$ belong to either $S_0$ or $S_1$, and their indices in these sequences are comprised between $1$ and $\lceil\frac{n}{2}\rceil$.
\end{proof}

\begin{lemma} \label{lem:corr-pass-if-out}
If some process $p \in P$ outputs a value $w \in \{0,1\}$ at \cref{line:sync-sym-break:out}, then two correct processes $p_i^v \in S_v, p_j^{v'} \in S_{v'}$ (where $i,j \in [1..\lceil\frac{n}{2}\rceil]$ and $v,v' \in \{0,1\}$) also output some values $v_1,v_2 \in \{0,1\}$ at \cref{line:sync-sym-break:out}.
\end{lemma}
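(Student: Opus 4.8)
The plan is to prove the statement by a case analysis on the instantiation parameter $\nooutput$, using as a structural backbone \Cref{lem:2-corr-in-S}, which guarantees the existence of two distinct correct processes $p_i^v \in S_v$ and $p_j^{v'} \in S_{v'}$ with $i,j \in [1..\lceil\frac{n}{2}\rceil]$. Since these two processes are correct, they necessarily reach their respective output rounds $i$ and $j$ (they never crash), so it remains to show that in each of those rounds the guard at \cref{line:sync-sym-break:cond} is satisfied, which forces each of them to execute \cref{line:sync-sym-break:out}.

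First I would dispatch the easy case $\nooutput = \ffalse$. Here the guard at \cref{line:sync-sym-break:cond} reduces to the round test $R = i$, because the disjunct $\nooutput = \ffalse$ is already true. Thus, when the correct processes $p_i^v$ and $p_j^{v'}$ reach rounds $i$ and $j$ respectively, they both enter the conditional and output at \cref{line:sync-sym-break:out}, independently of the hypothesis that $p$ output (which, in this case, is in fact automatically satisfied).

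The more delicate case is $\nooutput = \ttrue$, which I expect to be the main obstacle, since it hinges on the synchronous timing guarantees. Here, because $p$ reached \cref{line:sync-sym-break:out}, it must have passed the guard at \cref{line:sync-sym-break:cond}, which under $\nooutput = \ttrue$ forces $p$ to have $\observed$ some $\initi$. By C-Validity, this $\initi$ was communicated by a process of $P_\text{init}$ at \cref{line:sync-sym-break:comm-init}, \ie at the start of round $R = 1$. The key step is then to propagate this observation to the two correct processes in time: by C-Global-Termination, every correct process eventually observes $\initi$, and by C-Synchrony every such observation occurs during the communication step of round $1$ (the round in which $\initi$ was communicated). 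Consequently both $p_i^v$ and $p_j^{v'}$ observe $\initi$ during round $1$'s communication step, hence strictly before the computation step of their output rounds $i, j \geq 1$; for the boundary index $1$ this still holds, because within a round the communication step precedes the computation step. Therefore both correct processes satisfy the guard at \cref{line:sync-sym-break:cond} and output at \cref{line:sync-sym-break:out}.

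Combining the two cases, whenever some process outputs at \cref{line:sync-sym-break:out}, the two distinct correct processes of \Cref{lem:2-corr-in-S} also output there, which is exactly the claim. The only part requiring genuine care is the timing argument of the second case: one must invoke C-Synchrony to pin the observation of $\initi$ to round $1$ and C-Global-Termination to guarantee that it reaches the correct processes, and then verify that round $1$ is no later than each correct process's output round, including the $i = 1$ boundary where the intra-round ordering of communication before computation is what saves the argument.
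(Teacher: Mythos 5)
Your proof is correct and follows essentially the same route as the paper's: it invokes \Cref{lem:2-corr-in-S} for the two correct processes, splits on $\nooutput$, handles $\nooutput=\ffalse$ immediately, and for $\nooutput=\ttrue$ uses C-Validity to trace the $\initi$ back to round $1$ and then C-Global-Termination with C-Synchrony to ensure both correct processes observe it in time to pass the guard at \cref{line:sync-sym-break:cond}. Your extra care about the $i=1$ boundary (communication step preceding computation step within a round) is a welcome refinement of the paper's slightly terser "by the end of round $1$" phrasing.
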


\begin{proof}
Assume some process $p \in P$ outputs a value $w \in \{0,1\}$ at \cref{line:sync-sym-break:out} during synchronous round $k \in [1..\lceil\frac{n}{2}\rceil]$.
Let us also consider the two correct processes $p_i^v,p_j^{v'}$ belonging to either $S_0$ or $S_1$ (\Cref{lem:2-corr-in-S}), where $i,j \in [1..\lceil\frac{n}{2}\rceil]$ are round numbers and $v,v' \in \{0,1\}$ are default output values.
Process $p$ must have satisfied the condition at \cref{line:sync-sym-break:cond}, which leads to two opposite cases.
\begin{itemize}
    \item If $\nooutput=\ffalse$, then $p_i^v,p_j^{v'}$ do the following during round number $i$ and $j$, respectively: they enter the condition at \cref{line:sync-sym-break:cond}, reach \cref{line:sync-sym-break:out} and output some value in $\{0,1\}$.
    
    \item If $\nooutput=\ttrue$, then $p$ has observed some $\initi$ information, which, by C-Validity, must have been communicated during round $1$ by some process $p_\text{init} \in P_\text{init}$ at \cref{line:sync-sym-break:comm-init}.
    Moreover, by C-Global-termination and C-Synchrony, all correct processes in $P$ must also observe some $\initi$ information by the end of round $1$.
    In particular, correct processes $p_i^v,p_j^{v'}$ respectively satisfy the condition at \cref{line:sync-sym-break:cond} at rounds $i,j$, reach \cref{line:sync-sym-break:out}, and output some values $v_1,v_2 \in \{0,1\}$.
\end{itemize}
Therefore, $p_i^v,p_j^{v'}$ always output some values $v_1,v_2 \in \{0,1\}$ at \cref{line:sync-sym-break:out}.
\end{proof}

\begin{lemma} \label{lem:v-if-inv}
If some process $p_j^{v'} \in S_{v'}$ outputs the opposite of its default value $1 \xor v' \in \{0,1\}$ at \cref{line:sync-sym-break:out} during round $j \in [2..\lceil\frac{n}{2}\rceil]$, then some other process $p_k^\star \in P$ must have output $v'$ in a previous round $k < j$.
\end{lemma}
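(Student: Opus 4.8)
The plan is to trace the value output by $p_j^{v'}$ back through the communication medium to the process whose output triggered it. First I would note that, since $p_j^{v'}$ outputs $1 \xor v'$ rather than its default value $v'$ at \cref{line:sync-sym-break:out}, the conditional assignment at \cref{line:sync-sym-break:choose-v} must have selected the $1 \xor v'$ branch; by the guard of that assignment, this happens only if $p_j^{v'}$ observed some $\outputi(v')$ information by its computation step of round $j$.

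Next I would invoke C-Validity to argue that this $\outputi(v')$ was genuinely communicated by some process. Inspecting the unique place where such information is emitted, namely \cref{line:sync-sym-break:comm-v}, shows that the communicating process, call it $p_k^\star$, satisfied the guard $R = k+1$ with $v_\text{chosen} = v'$, and had already output a value. Hence $p_k^\star$ set $v_\text{chosen} = v'$ and output $v'$ at \cref{line:sync-sym-break:out} during its own output round $k$, one round before communicating $\outputi(v')$ at round $k+1$.

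The heart of the argument is the timing bound $k < j$, which I would establish from C-Synchrony together with the round structure of \Sync. Since $p_j^{v'}$ observes the $\outputi(v')$ by its computation step of round $j$, and since every observed message is, by C-Synchrony, observed precisely in the communication step of the round in which it was communicated, that message was communicated in some round no later than $j$. As $p_k^\star$ communicated it in round $k+1$, we obtain $k + 1 \leq j$, that is, $k < j$. Therefore $p_k^\star$ output $v'$ in the strictly earlier round $k$; and because $p_k^\star$ outputs $v' \neq 1 \xor v'$ while a process outputs at most one value, $p_k^\star$ is necessarily distinct from $p_j^{v'}$, giving the ``other process'' of the statement.

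I expect the timing step to be the only delicate part. One must use that the $\outputi$ information is emitted one round \emph{after} the corresponding output (round $k+1$ for $p_k^\star$), so that combining C-Synchrony with the fact that $p_j^{v'}$ acts in round $j$ yields the strict inequality $k < j$ rather than merely $k \leq j$; the remaining steps are direct reads of the guards at \cref{line:sync-sym-break:choose-v,line:sync-sym-break:comm-v,line:sync-sym-break:out} combined with C-Validity.
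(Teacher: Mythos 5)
Your proposal is correct and follows essentially the same route as the paper's proof: read off the guard at \cref{line:sync-sym-break:choose-v} to get an observed $\outputi(v')$, apply C-Validity to locate the communicator $p_k^\star$ at \cref{line:sync-sym-break:comm-v} in round $k+1$, and conclude $k+1 \leq j$, hence $k < j$. Your treatment is in fact slightly more explicit than the paper's, which asserts $2 \leq k+1 \leq j$ without spelling out the C-Synchrony timing argument or the distinctness of $p_k^\star$ from $p_j^{v'}$; both additions are sound and welcome.
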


\begin{proof}
If a process $p_j^{v'} \in S_{v'}$ outputs the opposite of its default value $1 \xor v' \in \{0,1\}$ at \cref{line:sync-sym-break:out} during round $j \in [2..\lceil\frac{n}{2}\rceil]$, then it must have chosen $v_\text{chosen} = 1 \xor v'$ at \cref{line:sync-sym-break:choose-v}, and therefore it must have observed some $\outputi(v')$ information.
By C-Validity, this $\outputi(v')$ must have been communicated by some process $p_k^\star \in P$ at \cref{line:sync-sym-break:comm-v} at the beginning of round $k+1$ (where $2 \leq k+1 \leq j$).
Moreover, $p_k^\star$ must have output $v'$ at \cref{line:sync-sym-break:out} during round $k$ (where $1 \leq k < j$).
\end{proof}

\begin{lemma} \label{lem:sync-01-if-single}
If some process $p \in P$ outputs a value $v \in \{0,1\}$ at \cref{line:sync-sym-break:out}, then another process $p' \in P \setminus \{p\}$ also outputs the opposite value $1 \xor v$ at \cref{line:sync-sym-break:out}.
\end{lemma}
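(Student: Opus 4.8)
The plan is to prove the stronger statement that whenever any process outputs a value, the execution produces the full output set $\{0,1\}$; this immediately yields the lemma, since if $p$ outputs $v$ then some process necessarily outputs $1 \xor v$, and that process is distinct from $p$ because $v \neq 1 \xor v$. To this end, I would start from the hypothesis that $p$ outputs some value at \cref{line:sync-sym-break:out} and invoke \Cref{lem:corr-pass-if-out} to obtain two distinct correct processes $p_i^a \in S_a$ and $p_j^b \in S_b$ (with $i,j \in [1..\lceil\frac{n}{2}\rceil]$ and $a,b \in \{0,1\}$) that also output values at \cref{line:sync-sym-break:out}. The whole argument then reduces to analysing what these two correct processes output, according to their default values $a$ and $b$.

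The first observation is that if either of these correct processes outputs the opposite of its default value, we are done: outputting $1 \xor a$ (resp.\ $1 \xor b$) requires, by \Cref{lem:v-if-inv}, that some process previously output $a$ (resp.\ $b$), so both $a$ and $1 \xor a$, i.e.\ both $0$ and $1$, appear in the execution. Note that outputting the opposite of the default requires having observed an $\outputi$ information, which is only communicated from round $2$ onwards, so the relevant output round lies in $[2..\lceil\frac{n}{2}\rceil]$ and \Cref{lem:v-if-inv} indeed applies. It therefore remains to treat the case where both $p_i^a$ and $p_j^b$ output their respective default values $a$ and $b$.

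In that remaining case I would split on whether the two correct processes lie in the same sequence. If $a \neq b$, then they output the two distinct values $0$ and $1$, and the execution produces the output set $\{0,1\}$. If $a = b$ (say both lie in $S_0$, the case $S_1$ being symmetric), with WLOG $i < j$, I claim this configuration is impossible: since $i < j \leq \lceil\frac{n}{2}\rceil$ we have $i+1 \leq \lceil\frac{n}{2}\rceil$, so the correct process $p_i^0$ communicates $\outputi(0)$ at \cref{line:sync-sym-break:comm-v} in round $i+1$; by C-Synchrony and C-Global-termination this information is observed by all processes during the communication step of round $i+1 \leq j$, hence before $p_j^0$ performs its computation step in round $j$. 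Consequently $p_j^0$ sets $v_\text{chosen} \gets 1 \xor 0 = 1$ at \cref{line:sync-sym-break:choose-v} and outputs $1$, contradicting the assumption that it outputs its default value $0$. Thus this subcase cannot occur, and every consistent scenario yields both output values.

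The main obstacle I anticipate is precisely this same-sequence subcase: it is the only place where the synchronous timing is essential, since the argument crucially relies on the fact that the earlier correct process's $\outputi$ information provably reaches the later process before the latter decides. The other cases reduce cleanly to the previously established \Cref{lem:corr-pass-if-out} and \Cref{lem:v-if-inv}, but here one must carefully track round indices, ensuring both that round $i+1$ lies within the active range $[1..\lceil\frac{n}{2}\rceil]$ and that C-Synchrony delivers the observation in time, in order to derive the contradiction.
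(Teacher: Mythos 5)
Your proof is correct and follows essentially the same route as the paper's: invoke \Cref{lem:corr-pass-if-out} to obtain two correct outputting processes, dispatch the non-default-output case via \Cref{lem:v-if-inv}, and settle the both-defaults case with the round-$(i{+}1)$ communication/C-Synchrony timing argument. The only difference is bookkeeping — you split on whether the two defaults coincide (deriving a contradiction in the same-sequence case), whereas the paper splits on $i=j$ versus $i<j$ and concludes $v_1 \neq v_2$ directly — and your explicit check that the opposite-of-default case can only occur from round $2$ onward (so \Cref{lem:v-if-inv} applies) is a small point the paper leaves implicit.
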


\begin{proof}
Let us assume that some process $p \in P$ outputs a value $v \in \{0,1\}$ at \cref{line:sync-sym-break:out}.
By \cref{lem:corr-pass-if-out}, two correct processes $p_i^v,p_j^{v'}$ belonging to either $S_0$ or $S_1$ respectively output some values $v_1,v_2 \in \{0,1\}$ at \cref{line:sync-sym-break:out}, during synchronous rounds $i,j \in [1..\lceil\frac{n}{2}\rceil]$.

If either $p_i^v$ or $p_j^{v'}$ output the opposite of their default value (\ie if $v_1 = 1 \xor v$ or $v_2 = 1 \xor v'$), then \cref{lem:v-if-inv} applies, and there is another process $p_k^\star$ that outputs $v_3$, which is the opposite value that they output (\ie $v_3 = 1 \xor v_1$ if $v_1 = 1 \xor v$, or $v_3 = 1 \xor v_2$ if $v_2 = 1 \xor v'$), so we are done.
Therefore, in the following, we assume that $p_i^v$ and $p_j^{v'}$ output their default values, \ie $v_1 = v$ and $v_2 = v'$.
We consider the following two cases.
\begin{itemize}
    \item Case \textit{(i)}: $i=j$.
    In this case, $p_i^v$ and $p_j^{v'}$ both output during the same synchronous round $i=j$.
    As $p_i^v$ and $p_j^{v'}$ are distinct processes, then they cannot belong to the same sequence $S_0$ or $S_1$, and we must have $v \neq v'$.
    But since $p_i^v$ and $p_j^{v'}$ respectively output $v$ and $v'$, then they output opposite values.

    \item Case \textit{(ii)}: $i < j$ (without loss of generality).
    In this case, $p_i^v$ has output before $p_j^{v'}$, $i \in [1..\lceil\frac{n}{2}\rceil-1]$, and $j \in [2..\lceil\frac{n}{2}\rceil]$.
    Let us remark that correct process $p_i^v$ must have communicated $\outputi(v_1)$ at \cref{line:sync-sym-break:comm-v} during round $i+1$ (where $2 \leq i+1 \leq j$), after it has ouput $v_1$ at \cref{line:sync-sym-break:out} during round $i$.
    By C-Local-termination, C-Global-termination, C-Synchrony, $p_j^{v'}$ must have observed $\outputi(v_1)$ by the end of round $i+1 \leq j$.
    However, since $v_2 = v'$ (\ie $p_j^{v'}$ outputs its default value $v'$), by \cref{line:sync-sym-break:choose-v}, $p_j^{v'}$ has not observed any $\outputi(v')$ information during or before round $j \geq 2$.
    This necessarily means that $v_1 \neq v' = v_2$, and therefore that $p_i^v$ and $p_j^{v'}$ output opposite values (\ie $v_1 = 1 \xor v_2$).
\end{itemize}
Whether in Case \textit{(i)} or Case \textit{(ii)}, processes $p_i^v$ and $p_j^{v'}$ output opposite values at \cref{line:sync-sym-break:out} (\ie $v_1 = 1 \xor v_2$), which concludes the lemma.
\end{proof}

\begin{theorem} \label{thm:sync-sym-break}
Under $\Sync$ and $t+2 \leq n \geq 2$, \Cref{alg:sync-sym-break} implements the following sets of output sets:
$$ O = \begin{cases}
    \{\{0,1\}\} & \text{if } \nooutput=\ffalse, \\
    \{\varnothing,\{0,1\}\} & \text{if } \nooutput=\ttrue.
\end{cases} $$
\end{theorem}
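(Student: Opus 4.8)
The plan is to mirror the structure of the proof of \Cref{thm:async-sym-break}, assembling the synchronous lemmas established above into safety and completeness arguments for each of the two instantiation cases. First I would verify that the condition $t+2 \leq n$ suffices to construct every process structure used by \Cref{alg:sync-sym-break}. The two sequences $S_0,S_1$ of sizes $\lfloor\frac{n}{2}\rfloor$ and $\lceil\frac{n}{2}\rceil$ merely partition $P$, so they exist for any $n$; and since $n \geq t+2 \geq t+1$, there are enough processes to form $P_\text{init}$ with $|P_\text{init}| \geq t+1$. I would also note that $n \geq 2$ is already implied by $n \geq t+2$.

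For safety, the impossibility of the singleton output sets $\{0\}$ and $\{1\}$ in both cases follows immediately as a corollary of \Cref{lem:sync-01-if-single}: if any process outputs a value $v$, then some other process outputs $1 \xor v$, so no singleton output set can ever arise. This already discharges the full safety requirement of Case 2 ($\nooutput = \ttrue$), whose admissible sets are $\{\varnothing,\{0,1\}\}$. For Case 1 ($\nooutput = \ffalse$) I additionally rule out $\varnothing$: by \Cref{lem:2-corr-in-S} there exist two distinct correct processes in $S_0 \cup S_1$, and when $\nooutput = \ffalse$ the disjunctive guard at \cref{line:sync-sym-break:cond} is automatically satisfied, so each such correct process $p_i^v$, upon reaching its designated round $i \leq \lceil\frac{n}{2}\rceil$, enters the conditional and outputs a value at \cref{line:sync-sym-break:out}. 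Hence at least one value is output and $\varnothing$ is impossible, leaving $\{0,1\}$ as the only possible output set.

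For completeness, the $\{0,1\}$ output set is needed in both cases. In Case 1 it is immediate from the preceding safety argument, since every execution produces exactly $\{0,1\}$. In Case 2 I rely on $|P_\text{init}| \geq t+1$ to guarantee a correct process $p_\text{init}^c$, and I construct an execution where $p_\text{init}^c$ pseudo-randomly picks $0$ and communicates $\initi$ in round $1$; by C-Global-termination and C-Synchrony all correct processes then observe $\initi$ by the end of round $1$ and pass the guard at \cref{line:sync-sym-break:cond}. In either situation a value is output, so \Cref{lem:sync-01-if-single} forces both values to appear, yielding the output set $\{0,1\}$. For the $\varnothing$ output set (needed only in Case 2) I exhibit a crash-free execution where every process in $P_\text{init}$ pseudo-randomly picks $1$, so no $\initi$ is ever communicated; then, since $\nooutput = \ttrue$, the guard at \cref{line:sync-sym-break:cond} fails for every process, no process outputs, and the output set is $\varnothing$.

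The main obstacle I anticipate is not in any single case but in making the $\varnothing$-exclusion argument for Case 1 airtight: one must confirm that a correct process is genuinely scheduled to reach its output round $i \leq \lceil\frac{n}{2}\rceil$ under the synchronous lock-step model, and that the $\nooutput=\ffalse$ branch of the guard at \cref{line:sync-sym-break:cond} does guarantee an output there. The completeness constructions are otherwise routine applications of the communication properties (C-Validity, C-Global-termination, C-Synchrony) together with the freedom to fix the PRNG seed and the failure pattern.
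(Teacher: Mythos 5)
Your proposal is correct and follows essentially the same route as the paper's proof: the same feasibility check for $S_0$, $S_1$, and $P_\text{init}$, safety via \Cref{lem:sync-01-if-single}, the $\varnothing$-exclusion for Case 1 via \Cref{lem:2-corr-in-S}, and the same completeness constructions (the correct $p_\text{init}^c$ picking $0$ for $\{0,1\}$, and all of $P_\text{init}$ picking $1$ for $\varnothing$). The only cosmetic difference is that you derive Case 1's completeness directly from its safety argument instead of restating a separate general-completeness step, which is a harmless reorganization.
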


\begin{proof}
We first show that condition $t+2 \leq n \geq 2$ is sufficient to construct the sequences and set of processes used in \Cref{alg:async-sym-break}: $S_0,S_1,P_\text{init}$.
Since we have $n = \lfloor\frac{n}{2}\rfloor + \lceil\frac{n}{2}\rceil = |S_0|+|S_1|$, then there are enough processes to construct the sequences of distinct processes $S_0,S_1$.
Furthermore, since we have $n \geq t+2$ and $|P_\text{init}| \geq t+1$, then there are enough processes to construct the set $P_\text{init}$.
Let us also remark that $n \geq 2$ is already implied by $n \geq t+2$.
We consider the following two opposite cases.  
\begin{itemize}
    \item In Case 1, we have $\nooutput=\ffalse$.
    In this case, we must prove that the set of output sets is $O = \{\{0,1\}\}$.

    \item In Case 1, we have $\nooutput=\ttrue$.
    In this case, we must prove that the set of output sets is $O = \{\varnothing,\{0,1\}\}$.
\end{itemize}
We now prove that the following safety and completeness results hold for both Cases 1 and 2.
\begin{enumerate}
    \item General-Safety: Impossibility of $\{0\}$ and $\{1\}$ in Cases 1 and 2.
    By corollary of \Cref{lem:sync-01-if-single}, the output sets $\{0\}$ and $\{1\}$ cannot happen.

    \item General-Completeness: Possibility of ${0,1}$ in Cases 1 and 2.
    By \Cref{lem:2-corr-in-S}, there is a correct process $p_i^v$ in $S_0$ or $S_1$ that reaches the condition at \cref{line:sync-sym-break:cond} during synchronous round $i \in [1..\lceil\frac{n}{2}\rceil]$.
    Let us show that, no matter the case, there is an execution where $p_i^v$ passes the condition at \cref{line:sync-sym-break:cond}.
    \begin{itemize}
        \item In Case 1, we have $\nooutput=\ffalse$, so $p_i^v$ satisfies the condition \cref{line:sync-sym-break:cond} at round $i$.
        
        \item In Case 2, we have $\nooutput=\ttrue$.
        Since we have $|P_\text{init}| \geq t+1$, there is at least one correct process $p_\text{init}^c \in P_\text{init}$, and there exists an execution where $p_\text{init}^c$ picks $0$ and then communicates $\initi$ at \cref{line:sync-sym-break:init} during round $1$.
        By C-Local-Termination, C-Global-termination, and C-Synchrony, $p_i^v$ observes $\initi$ by the end of round $1$, and therefore, when it reaches round $i \geq 1$, the condition at \cref{line:sync-sym-break:cond} is satisfied.
    \end{itemize}
  Therefore, there is an execution where $p_i^v$ passes \cref{line:sync-sym-break:cond}, and then outputs some value $v \in \{0,1\}$ at \cref{line:sync-sym-break:out} during round $i$.
  By \Cref{lem:sync-01-if-single}, another process also outputs the opposite value $1 \xor v$, thus producing the output set $\{0,1\}$.
\end{enumerate}

The safety of Case 2 follows from General-Safety.
For Case 1, we also have to prove the impossibility of $\varnothing$.
By \Cref{lem:2-corr-in-S}, there is a correct process $p_i^v$ belonging to $S_0$ or $S_1$.
As $\nooutput=\ttrue$ in Case 2, then $p_i^v$ must pass \cref{line:sync-sym-break:cond} during round $i \in [1..\lceil\frac{n}{2}\rceil]$, reach \cref{line:sync-sym-break:out}, and output some value $v' \in \{0,1\}$.
Therefore, the output set $\varnothing$ is impossible.

The completeness of Case 1 follows from General-Completeness.
For Case 2, we have to prove the possibility of $\varnothing$.
Since $\nooutput=\ttrue$ in Case 2, there is an execution where no process from $P_\text{init}$ communicates $\initi$ at \cref{line:sync-sym-break:init} (because all processes in $P_\text{init}$ either crashed or picked $1$ at \cref{line:sync-sym-break:init}).
In this execution, no process in $P$ passes \cref{line:sync-sym-break:cond}, and therefore, no process outputs any value in this execution, thus producing output set $\varnothing$.
\end{proof}

\section{Conclusion} \label{sec:conclusion}

In this work, we exhaustively characterize solvability conditions for binary-output tasks under crash failures.
More particularly, we focus on the sets of distinct output values that executions of these tasks can produce, which breaks down binary-output tasks into 16 classes.
Our results cover both necessity, via impossibility proofs, and sufficiency, via implementing algorithms, thereby offering a definitive picture~(\Cref{tab:characterization}) of the boundary between possible and impossible.
By abstracting away from specific input assumptions and focusing on possible outputs, the conditions we present also serve as lower bounds (albeit not necessarily tight) for any stronger problem formulation.

Some of the results connect back to some well-studied problems, such as binary consensus and symmetry breaking, as discussed in \Cref{sec:characterization}.
One particularly interesting problem we discovered is the \textit{disagreement} problem, an instance of binary symmetry breaking, which requires the system to never ``agree'' on one single output value ($0$ or $1$).
To our knowledge, this particular problem has not been studied before, and it illustrates how our framework exposes subtle but fundamental problems that are not captured by classical formulations.

We propose in the following some possible extensions to our current framework.
Currently, we do not require that all correct processes eventually output some value.
It is left for future work to explore whether adding this requirement will change the conditions.
Exploring the tight conditions in partial synchronous environments is another interesting venue.
Another logical extension would be to consider multi-valued outputs (\ie not limited to $0$/$1$), which will significantly increase the number of possible sets of output sets.
Moreover, output sets hide some structural information of output values, such as their multiplicity or the information about which process outputs the value.
This can be done by adding additional information to output sets or by directly studying output vectors.
Finally, the task input and validity constraints (relation between input and output) are also aspects to be explored in future work.

\bibliographystyle{plain}
\bibliography{bibliography}

\appendix

\section{Sufficiency: Additional Algorithms and Proofs} \label{apx:more-algos}

This appendix presents the remaining algorithms and correctness proofs used for proving the sufficiency of the tightness conditions of \Cref{tab:characterization}.

\subsection{Communication-Less All-Output Algorithm (Lines 1-2, 11-15)}

The algorithm of this section (\Cref{alg:async-all-uncond}) addresses lines 1 to 2, and 11 to 15 in \Cref{tab:characterization}.
It requires no communication, as it only involves one process outputting a pseudo-random value in isolation.
Therefore, this algorithm works both in the $\Sync$ and $\Async$ timing models.
This algorithm is instantiated with one parameter (\cref{line:async-all-uncond:init}): a set of values $V \subseteq \{0,1,\bot\}$ determining all output behaviors of the algorithm.
Every process $p \in P$ only picks a pseudo-random value $v \in V$ (\cref{line:async-all-uncond:pick}) and outputs this value if it is not $\bot$ (\cref{line:async-all-uncond:out}).

\begin{theorem} \label{thm:async-all-uncond}
Under any timing model ($\Async$ or $\Sync$) and $n \geq \max(\{|o| : o \in O\})$, \Cref{alg:async-all-uncond} instantiated with the nonempty set of values $V \subseteq \{0,1,\bot\}$ implements the following sets of output sets:
$$ O = \begin{cases}
  2^V \setminus \{\varnothing\} &\text{if } \bot \notin V \text{ and } t < n, \\
  2^{(V \setminus \{\bot\})} &\text{if } \bot \in V \text{ and } t \leq n.
\end{cases} $$
\end{theorem}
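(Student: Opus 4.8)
The plan is to exploit the fact that \Cref{alg:async-all-uncond} performs no communication: each process $p$ independently draws $v \gets \prpick(V)$ and outputs $v$ exactly when $v \neq \bot$. Since local computation is instantaneous, a process either crashes before taking its single step (and outputs nothing) or takes the step and outputs its drawn value (possibly crashing immediately afterward). Consequently, the output set $\SV(\Vout)$ of any execution is precisely the set of non-$\bot$ values drawn by the processes that reach their output step, and it is entirely determined by the PRNG seed and the failure pattern. First I would record this structural observation, then treat safety and completeness separately for each of the two cases.

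For \textbf{safety}, the key point is that every value a process outputs lies in $V \setminus \{\bot\}$, so $\SV(\Vout) \subseteq V \setminus \{\bot\}$ in every execution. In Case 2 ($\bot \in V$) this immediately gives $\SV(\Vout) \in 2^{(V \setminus \{\bot\})} = O$, with no lower bound needed since $\varnothing \in O$. In Case 1 ($\bot \notin V$, $t < n$) I would additionally argue non-emptiness: since $f \le t < n$, at least one process is correct, and because $\bot \notin V$ it necessarily outputs its drawn value; hence $\SV(\Vout)$ is a non-empty subset of $V$, i.e. $\SV(\Vout) \in 2^V \setminus \{\varnothing\}$.

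For \textbf{completeness}, I would fix an arbitrary $f$ with $0 \le f \le t$ and a target $o \in O$, and exhibit one execution producing exactly $o$. If $o = \varnothing$ (which can only arise in Case 2, where $\bot \in V$), choosing a seed that makes every process draw $\bot$ yields no output at all. If $o \neq \varnothing$, I would use $n \ge \max(\{|o| : o \in O\}) \ge |o|$ to assign the $|o|$ distinct values of $o$ to distinct processes via the seed, letting the remaining processes draw $\bot$ (Case 2) or any value already in $o$ (Case 1); this forces $\SV(\Vout) = o$. To reach exactly $f$ failures without disturbing the output set, I would designate any $f$ processes as faulty and schedule each of them to crash \emph{after} its instantaneous output step, which is always possible; thus being faulty never removes a value from, nor adds a value to, the target set $o$.

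The main obstacle, and the one place the argument needs care, is that completeness must hold for \emph{every} $f \in [0,t]$ simultaneously, rather than just for a convenient failure count. The resolution is exactly the decoupling noted above: because computation is instantaneous and crashes may be placed after a process's output step, the number of failures $f$ can be fixed independently of which values are produced. Everything else reduces to the counting inequalities $n \ge |o|$ (Cases 1 and 2, from the hypothesis on $n$) and $n - t \ge 1$ (Case 1, from $t < n$), which I would verify match the conditions listed for the relevant lines of \Cref{tab:characterization}.
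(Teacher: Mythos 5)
Your proposal is correct and follows essentially the same route as the paper's proof: safety comes from the observation that every output value lies in $V \setminus \{\bot\}$ (plus the existence of a correct process that must output when $\bot \notin V$ and $t < n$), and completeness from exhibiting, for each target $o \in O$, an execution whose seed and failure pattern make exactly the values of $o$ be drawn and output. If anything, your write-up is slightly more careful than the paper's on two points: you explicitly force the non-designated processes to draw $\bot$ (Case 2) or a value already in $o$ (Case 1) so that the output set is \emph{exactly} $o$, and you explicitly discharge the quantification over all $f \in [0,t]$ by scheduling crashes after the instantaneous output step.
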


\begin{proof}
In the following lemma, we denote by $P_\text{out} \subseteq P$ the subset of processes that output a value at \cref{line:async-all-uncond:out}.
Let us proceed by exhaustion.
\begin{itemize}
    \item In Case 1, we have $\bot \notin V$ and $t < n$.
    In this case, there is at least one correct process $p \in P$, and we must prove that \Cref{alg:async-all-uncond} implements the set of output sets $O = 2^V \setminus \{\varnothing\}$.
    Therefore, the set of forbidden output sets is $(2^{\{0,1\}} \setminus 2^V) \cup \{\varnothing\}$.
    
    \item In Case 2, we have $\bot \in V$ and $t \leq n$.
    In this case, all processes could crash, and we must prove that \Cref{alg:async-all-uncond} implements the set of output sets $O = 2^((V \setminus \{\bot\}))$.
    Therefore, the set of forbidden output sets is $2^{\{0,1\}} \setminus 2^{(V \setminus \{\bot\})}$.
\end{itemize}

We begin with general safety and completeness results that apply to both Cases 1 and 2.
\begin{enumerate}
    \item General-Safety: Impossibility of all sets in $2^{\{0,1\}} \setminus 2^{(V\setminus\{\bot\})}$ in Case 1 and 2.
    Let us consider a set $o \in 2^\{0,1\} \setminus 2^{(V\setminus\{\bot\})}$.
    Since $o$ is in $2^{\{0,1\}}$ but not in $2^{(V\setminus\{\bot\})}$, it means that $o$ contains values that are not in $V \setminus \{\bot\}$.
    But as processes only output values in $V \setminus \{\bot\}$ at \cref{line:async-all-uncond:out}, then $o$ is impossible.
    
    \item General-Completeness: Possibility of all sets in $2^{(V \setminus \{\bot\})} \setminus \{\varnothing\}$.
    Since $n \geq \max(\{|o| : o \in O\})$ and all failure patterns are possible, there is a set of executions where at least $\max(\{|o| : o \in O\})$ processes pass the condition at \cref{line:async-all-uncond:out} (either because $\bot \notin V$ or because all of these processes picked a non-$\bot$ value at \cref{line:async-all-uncond:pick}).
    In these executions, we have $|P_\text{out}| \geq \max(\{|o| : o \in O\})$.
    Among these executions, and for every $o \in 2^{(V \setminus \{\bot\})} \setminus \{\varnothing\}$, there is an execution where at least $|o|$ processes in $|P_\text{out}|$ pick and output each of the values in $o$ at \cref{line:async-all-uncond:out}, therefore producing output set $o$.
\end{enumerate}

\begin{algorithm}[tb]
\InstParam{set of values $V \subseteq \{0,1,\bot\}$.} \label{line:async-all-uncond:init}
\smallskip

\ProcCode{$p \in P$}{
    $v \gets \prpick(V)$; \label{line:async-all-uncond:pick} \\
    \lIf{$v \neq \bot$}{$\ooutput$ $v$;} \label{line:async-all-uncond:out}
}
\caption{Communication-less symmetric algorithm for lines 1-2 and 11-15 of \Cref{tab:characterization}.}
\label{alg:async-all-uncond}
\end{algorithm}

The safety of Case 2 follows from General-Safety.
For Case 1, we also need to show the impossibility of $\varnothing$.
Since $\bot \notin V$, some process $p$ is correct, must pass the condition at \cref{line:async-all-uncond:out}, and output a value at \cref{line:async-all-uncond:out}.
Therefore $\varnothing$ is impossible.

The completeness of Case 1 follows from General-Completeness.
For Case 2, we also need to show the possibility of $\varnothing$.
There is an execution where $P_\text{out} = \varnothing$ (because all processes either crashed or picked $\bot$ at \cref{line:async-all-uncond:pick}), therefore producing output set $\varnothing$.
\end{proof}

\subsection{Communication-Less Single-Output Algorithm (Lines 9-10)}
The algorithm of this section (\Cref{alg:async-single}) addresses lines 9 to 10 in \Cref{tab:characterization}.
It requires no communication, as it only involves one process outputting a pseudo-random value in isolation.
Therefore, this algorithm works both in the $\Sync$ and $\Async$ timing models.

This algorithm is instantiated with one parameter (\cref{line:async-single:instantiation}): a Boolean $\nooutput \in \{\ttrue, \ffalse\}$, determining if the empty $\varnothing$ output set can be produced or not.
At the initialization of the algorithm (\cref{line:async-single:pickp}), one special process $p \in P$ is chosen to be the only one that performs actions (\cref{line:async-single:pStart,line:async-single:output}), while all other processes do nothing.

Process $p$ first selects the set $V$ corresponding to the different output behaviors it can follow (\cref{line:async-single:initV}): $\{0,1\}$ if $\nooutput=\ffalse$, or $\{0,1,\bot\}$ if $\nooutput=\ttrue$.
Then, $p$ pseudo-randomly picks a value $v$ from $V$ (\cref{line:async-single:pickV}).
If $v \neq \bot$, then $p$ outputs $v$ at \cref{line:async-single:output}, otherwise $p$ does nothing.

\begin{algorithm}[tb]
\InstParam{Boolean $\nooutput \in \{\ttrue,\ffalse\}$.} \label{line:async-single:instantiation}
\smallskip

\Init{pick some $p \in P$.} \label{line:async-single:pickp}
\smallskip

\ProcCode{$p \in P$}{ \label{line:async-single:pStart}
    $V \gets \{0,1,\bot\}$ \textbf{if} $\nooutput=\ttrue$ \textbf{else} $\{0,1\}$; \label{line:async-single:initV} \\
    $v \geq \prpick(V)$; \label{line:async-single:pickV} \\
    \lIf{$v \neq \bot$}{$\ooutput$ $v$.} \label{line:async-single:output}
}
\caption{Communication-less algorithm for lines 9 and 10 of \Cref{tab:characterization}, with a single process outputting and assuming $n \geq 1$.}
\label{alg:async-single}
\end{algorithm}

\begin{theorem} \label{thm:async-single}
Under any timing model ($\Async$ or $\Sync$) and $n \geq 1$, \Cref{alg:async-single} instantiated with Boolean $\nooutput \in \{\ttrue,\ffalse\}$ implements the following sets of output sets:
$$ O = \begin{cases}
  \{\{0\},\{1\}\} &\text{if } \nooutput = \ffalse \text{ and } t = 0, \\
  \{\varnothing,\{0\},\{1\}\} &\text{if } \nooutput = \ttrue \text{ and } t \leq n.
\end{cases} $$
\end{theorem}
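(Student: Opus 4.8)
The plan is to verify the two implementation properties (Safety and Completeness) separately for each instantiation, mirroring the structure of the proof of \Cref{thm:async-all-uncond}. I would proceed by exhaustion on $\nooutput$, treating Case 1 ($\nooutput = \ffalse$, $t = 0$) and Case 2 ($\nooutput = \ttrue$, $t \leq n$). The central safety observation, common to both cases, is that \Cref{alg:async-single} designates a single process $p$ (\cref{line:async-single:pickp}) as the only one that ever reaches \cref{line:async-single:output}, while every other process does nothing. Since each process outputs at most one value, the output vector has at most one non-$\bot$ entry, so $\SV(\Vout)$ is always contained in a singleton. Hence the output set $\{0,1\}$ can never be produced, which rules out the only forbidden \emph{non-empty} set in both cases.

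For completeness of $\{0\}$ and $\{1\}$, I would exhibit, for each target bit $w \in \{0,1\}$, an execution in which $p$ is correct (a failure pattern with $f = 0$, which exists since $0 \leq t$) and in which the PRNG seed makes $p$ pick $v = w$ at \cref{line:async-single:pickV}. Then $p$ outputs $w$ at \cref{line:async-single:output} while no other process outputs, yielding $\SV(\Vout) = \{w\}$. This step relies on the model's assumption (stated in the pseudo-randomness paragraph) that the generated values differ across executions, so the executions picking $0$ and picking $1$ both exist; the same remark justifies the $\varnothing$-producing execution below.

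The two cases then diverge only in their treatment of $\varnothing$. For Case 1, safety additionally requires ruling out $\varnothing$: since $t = 0$, process $p$ is correct, and as $\nooutput = \ffalse$ sets $V = \{0,1\}$ at \cref{line:async-single:initV}, the picked value $v$ satisfies $v \neq \bot$, so $p$ necessarily outputs and $\varnothing$ is impossible. For Case 2, completeness additionally requires $\varnothing$: since $\nooutput = \ttrue$ puts $\bot \in V$, there is an execution where $p$ picks $v = \bot$ and thus never outputs, so $\SV(\Vout) = \varnothing$ (alternatively, when $t \geq 1$, an execution where $p$ crashes before \cref{line:async-single:output} also works). The proof is otherwise routine; the only point deserving care is the safety half of Case 1, where one must note that it is precisely the conjunction of $t = 0$ and $\nooutput = \ffalse$ that forces an output and thereby forbids $\varnothing$ — dropping either hypothesis would reintroduce the empty output set.
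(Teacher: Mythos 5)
Your proposal is correct and follows essentially the same route as the paper's proof: exhaustion over the two instantiations, a shared safety argument that $\{0,1\}$ is impossible because only the designated process $p$ ever outputs, a shared completeness argument producing $\{0\}$ and $\{1\}$ via executions where $p$ is correct and the PRNG picks the target bit, and then the case-specific treatment of $\varnothing$ (forced output under $t=0$ and $\nooutput=\ffalse$; a $\bot$-pick or crash otherwise). No gaps; the paper's proof differs only in phrasing.
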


\begin{proof}
Let us proceed by exhaustion.
\begin{itemize}
    \item In Case 1, we have $\nooutput = \ffalse$ and $t = 0$.
    In this case, no process can crash, and we must prove that \Cref{alg:async-single} produces the set of output sets $\{\{0\},\{1\}\}$.
    Therefore, the set of forbidden output sets is $\{\varnothing,\{0,1\}\}$.
    
    \item In Case 2, we have $\nooutput = \ttrue$ and $t \leq 0$.
    In this case, all processes can crash, and we must prove that \Cref{alg:async-single} produces the set of output sets $\{\varnothing,\{0\},\{1\}\}$.
    Therefore, the set of forbidden output sets is $\{\{0,1\}\}$.
\end{itemize}

We begin with general safety and completeness results that apply to both Cases 1 and 2.
\begin{enumerate}
    \item General-Safety: Impossibility of $\{0,1\}$ in Cases 1 and 2.
    For an algorithm to produce two outputs in some execution, there must be at least two processes in some execution that output.
    But by construction, in \Cref{alg:async-single}, only process $p$ can output a value at \cref{line:async-single:pickp}.
    
    \item General-Completeness: Possibility of $\{0\}$ and $\{1\}$ in Cases 1 and 2.
    Since $n \geq \max(\{|o| : o \in O\})$ and all failure patterns are possible, there must be some executions in which process $p$ never crashes.
    Among these executions, there must be one in which $p$ picks $0$ (resp. $1$) at \cref{line:async-single:pickV} and outputs $0$ (resp. $1$) at \cref{line:async-single:output}, therefore producing the output set $\{0\}$ (resp. $\{1\}$).
\end{enumerate}

The safety of Case 2 follows from General-Safety.
For Case 1, we also need to show the impossibility of $\varnothing$. 
Since $\nooutput=\ffalse$ and no process can crash, $p$ will always pick some value at \cref{line:async-single:pickV} and output it at \cref{line:async-single:output}.

The completeness of Case 1 follows from General-Completeness.
For Case 2, we also need to show the possibility of $\varnothing$.
There is some execution where $p$ either crashes before outputting, or picks $\bot$ at \cref{line:async-single:pickV}.
In this execution, $p$ will not reach \cref{line:async-single:output}, therefore, producing the output set $\varnothing$.
\end{proof}

\subsection{Timing-Adaptive Algorithm (Lines 3-6)}
The algorithm of this section (\Cref{alg:async-all-cond}) addresses lines 3 to 6 in \Cref{tab:characterization}.
It is timing-adaptive as it provides different control flows depending on whether the system is synchronous or asynchronous.

This algorithm is instantiated by providing two parameters (\cref{line:async-all-cond:instant}): one value $v \in \{0,1\}$ and one Boolean $\nooutput \in \{\ttrue,\ffalse\}$.
Value $v$ is the \textit{default value}: it always has to be output by the system, as long as any value is output.
Boolean $\nooutput$ determines if the empty $\varnothing$ output set can be produced or not.
At the initialization of the algorithm, one special process $p \in P$ is chosen (\cref{line:async-all-cond:init}) to be the one that will guarantee that the opposite value of $v$ (\ie $1 \xor v$) can sometimes be output if $v$ has also been output.

\begin{algorithm}[tb]
\InstParam{value $v \in \{0,1\}$, $\nooutput \in \{\ttrue,\ffalse\}$.} \label{line:async-all-cond:instant}
\smallskip

\Init{pick some process $p \in P$.} \label{line:async-all-cond:init}
\smallskip

\ProcCode{$p' \in P \setminus \{p\}$}{ \label{line:async-all-cond:pNotLeader}
    \If{$\nooutput=\ffalse$ \cOr $\prpick(\{0,1\})=0$}{ \label{line:async-all-cond:PNL-Pick}
        $\ooutput$ $v$; \label{line:async-all-cond:PNL-Output} \\
        $\communicate$ $\outputi(v)$. \label{line:async-all-cond:PNL-Communicate}
    }
}
\smallskip

\ProcCode{$p$}{ \label{line:async-all-cond:leader}
    \If{$\nooutput=\ffalse$ \cOr $\prpick(\{0,1\})=0$}{ \label{line:async-all-cond:leaderPick}
        \lIf{the system is synchronous}{$\wait$ for the communication step of round $1$;} \label{line:async-all-cond:sync}
        \lElseIf{the system is asynchronous}{$\wait$ for a predefined local time;} \label{line:async-all-cond:async}
        \lIf{$p$ $\observed$ some $\outputi(v)$}{$\ooutput$ $\prpick(\{0,1\})$;} \label{line:async-all-cond:observeOtherOutput}
        \lElse{$\ooutput$ $v$.} \label{line:async-all-cond:noOtherOutput}
    }
}
\caption{Asymmetric algorithm for lines 3-6 of \Cref{tab:characterization}, assuming $n \geq 2$.
  Moreover, for lines 3 and 5 (which allow the $\varnothing$ output set), it assumes $t \leq n$, and for lines 4 and 6 (which forbid the $\varnothing$ output set), it assumes $t < n$.}
\label{alg:async-all-cond}
\end{algorithm}

The code of every process $p' \in P$ that is not $p$ is as follows (\cref{line:async-all-cond:pNotLeader,line:async-all-cond:PNL-Communicate}).
Process $p'$ passes the condition at \cref{line:async-all-cond:PNL-Pick} only if $\varnothing$ is a forbidden output set or if $p'$ pseudo-randomly flips the ``correct'' bit to $0$.
Then, $p'$ outputs $v$ (\cref{line:async-all-cond:PNL-Output}) and communicates some $\outputi(v)$ (\cref{line:async-all-cond:PNL-Communicate}).

The code of process $p$ is as follows (\cref{line:async-all-cond:leader,line:async-all-cond:noOtherOutput}).
Like with $p'$, process $p$ passes the condition at \cref{line:async-all-cond:leaderPick} only if $\varnothing$ is a forbidden output set or if $p$ pseudo-randomly flips the ``correct'' bit to $0$.
Then, $p$ waits for a given time, depending on the timing model: if the system is synchronous, $p$ waits for the communication step of synchronous round $1$ (\cref{line:async-all-cond:sync}), otherwise (\ie if the system is asynchronous) it waits until some predefined local time\footnote{
  In this case, the assumption that, in asynchrony, every communication-delay pattern can happen, helps us to guarantee that $p$ sometimes observes information before this local time, see \Cref{sec:formalization}.
} (\cref{line:async-all-cond:async}).
After this wait, $p$ outputs a random bit if it has observed some $\outputi(v)$ (\cref{line:async-all-cond:observeOtherOutput}), otherwise it outputs the default value $v$ (\cref{line:async-all-cond:noOtherOutput}).

\begin{theorem} \label{thm:async-all-cond}
Under any timing model ($\Async$ or $\Sync$) and $n \geq 2$, \Cref{alg:async-all-cond} instantiated with value $v \in \{0,1\}$ and Boolean $\nooutput \in \{\ttrue, \ffalse\}$ implements the following sets of output sets:
$$ O = \begin{cases}
  \{\{v\},\{v, 1 \xor v\}\} &\text{if } \nooutput = \ffalse \text{ and } t < n, \\
  \{\varnothing,\{v\},\{v, 1 \xor v\}\} &\text{if } \nooutput = \ttrue \text{ and } t \leq n.
\end{cases} $$
\end{theorem}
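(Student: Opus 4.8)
The plan is to reuse the template of the two preceding disagreement theorems: first check that the stated conditions suffice to instantiate the algorithm, then split on the value of \nooutput, proving a common ``General-Safety'' and ``General-Completeness'' core valid in both cases, and finally discharge the case-specific facts concerning the empty output set $\varnothing$. The instantiation is immediate: since $n \geq 2$, we can always pick the distinguished process $p$ (\cref{line:async-all-cond:init}) together with at least one other process $p' \in P \setminus \{p\}$, and no further resources are required. Recalling that the full universe of output sets is $\{\varnothing,\{0\},\{1\},\{0,1\}\}$ and that $\{v,1\xor v\}=\{0,1\}$, the forbidden sets are exactly $\{1\xor v\}$ and $\varnothing$ in Case 1, and only $\{1\xor v\}$ in Case 2.

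The heart of the argument is General-Safety, namely that the output set $\{1\xor v\}$ can never be produced in either case. I would first observe that every non-special process outputs only the default value $v$ (at \cref{line:async-all-cond:PNL-Output}), so the opposite value $1\xor v$ can be output \emph{only} by $p$, and only at \cref{line:async-all-cond:observeOtherOutput}. But $p$ reaches \cref{line:async-all-cond:observeOtherOutput} solely after it has observed some $\outputi(v)$; by C-Validity this information was previously communicated by a process at \cref{line:async-all-cond:PNL-Communicate}, which must therefore have already output $v$ at \cref{line:async-all-cond:PNL-Output}. Hence whenever $1\xor v$ is output, $v$ is output too, so the output set is $\{v,1\xor v\}$ and never the singleton $\{1\xor v\}$.

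For General-Completeness I would exhibit one execution per target set. To produce $\{v\}$, let a single correct non-special process pass its guard (automatically when $\nooutput=\ffalse$, or by picking $0$ when $\nooutput=\ttrue$) and output $v$, while arranging that $p$ also outputs $v$ (either $p$ fails to observe $\outputi(v)$ and falls to \cref{line:async-all-cond:noOtherOutput}, or it observes it and its $\prpick(\{0,1\})$ at \cref{line:async-all-cond:observeOtherOutput} returns $v$). To produce $\{v,1\xor v\}$, let a non-special process output $v$ and communicate $\outputi(v)$, let $p$ pass its guard and observe this information before its wait elapses, and let $p$'s $\prpick(\{0,1\})$ return $1\xor v$. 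The main obstacle is this last timing step: guaranteeing that $p$ can indeed observe $\outputi(v)$ before the wait at \cref{line:async-all-cond:sync,line:async-all-cond:async} terminates. In \Sync this follows from C-Synchrony together with C-Global-Termination (the $\outputi(v)$ communicated in round $1$ is observed within round $1$, before $p$ evaluates \cref{line:async-all-cond:observeOtherOutput}); in \Async it relies on the assumption, recorded in the footnote, that every communication-delay pattern is admissible, so some pattern delivers $\outputi(v)$ to $p$ before its predefined local time.

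Finally I would close the case-specific gaps. For Case 1 ($\nooutput=\ffalse$, $t<n$), the set $\varnothing$ is impossible because at least one process is correct and its guard at \cref{line:async-all-cond:PNL-Pick} or \cref{line:async-all-cond:leaderPick} is trivially satisfied, so it outputs some value (a non-special process outputs $v$, or $p$ outputs $v$ at \cref{line:async-all-cond:noOtherOutput} or a random bit at \cref{line:async-all-cond:observeOtherOutput}); its remaining completeness obligation follows from General-Completeness. For Case 2 ($\nooutput=\ttrue$, $t\le n$), its safety follows directly from General-Safety, and $\varnothing$ is achievable via the execution in which every process either crashes or picks $1$ at its guard, so no process passes \cref{line:async-all-cond:PNL-Pick} or \cref{line:async-all-cond:leaderPick} and nothing is output.
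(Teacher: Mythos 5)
Your proposal is correct and follows essentially the same route as the paper's own proof: the same split into General-Safety (impossibility of $\{1 \xor v\}$ via C-Validity, since only $p$ can output $1 \xor v$ and only after some other process has already output and communicated $v$) and General-Completeness (exhibiting executions for $\{v\}$ and $\{v, 1\xor v\}$ using the admissibility of all failure and communication-delay patterns), followed by the same case-specific handling of $\varnothing$. Your treatment of the timing step (C-Synchrony/C-Global-Termination in \Sync, arbitrary delay patterns in \Async) just makes explicit what the paper relegates to a footnote, so there is no substantive difference.
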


\begin{proof}
In the following lemma, we denote by $P_\text{out} \subseteq P$ the subset of processes that output a value at \cref{line:async-all-cond:PNL-Output}, \cref{line:async-all-cond:observeOtherOutput}, or \cref{line:async-all-cond:noOtherOutput}.
Let us proceed by exhaustion.
\begin{itemize}
    \item In Case 1, we have $\nooutput=\ffalse$ and $t < n$.
    In this case, there is at least one correct process $p \in P$, and we must prove that \Cref{alg:async-all-cond} produces the set of output sets $\{\{v\},\{v, 1 \xor v\}\} \setminus \{\varnothing\}$.
    Therefore, the set of forbidden output sets is $\{\varnothing, \{1 \xor v\}\}$.
    
    \item In Case 2, we have $\nooutput=\ttrue$ and $t \leq n$.
    In this case, all processes could crash, and we must prove that \Cref{alg:async-all-cond} produces the set of output sets $\{\varnothing,\{v\},\{v, 1 \xor v\}\}$.
    Therefore, the set of forbidden output sets is just $\{1 \xor v\}$.
\end{itemize}

We begin with general safety and completeness results that apply to both Cases 1 and 2.
\begin{enumerate}
    \item General-Safety: Impossibility of $\{1 \xor v\}$ in Cases 1 and 2.
    Any process $p' \in P\setminus\{p\}$ that outputs a value at \cref{line:async-all-cond:PNL-Output} will output $v$.
    Process $p$ outputs value $1 \xor v$ (with some probability) only if it has observed that at least one other process has output $v$ (\cref{line:async-all-cond:observeOtherOutput}). 
    C-Validity ensures that some process $p'$ first outputs $v$ and then communicates this information (\cref{line:async-all-cond:PNL-Output} and \cref{line:async-all-cond:PNL-Communicate}), and therefore that value $v$ has also been output.
    Thus, $o=\{1 \xor v\}$ is impossible.
    
    \item General-Completeness: Possibility of all sets in $\{\{v\},\{v, 1 \xor v\}\}$ in Cases 1 and 2. 
    We first argue the possibility of $\{v, 1 \xor v\}$. 
    Since $n \geq 2$ and all failure patterns (and communication-delay patterns in asynchrony) are possible, there is a set of executions where process $p$ (the selected one) and at least another process $p' \in P \setminus\{p\}$ do not crash and output a value.
    That is, process $p'$ passes the condition at \cref{line:async-all-cond:PNL-Pick} and therefore outputs $v$, and process $p$ passes the check at \cref{line:async-all-cond:leaderPick} and observes $\outputi(v)$ before reaching the condition at \cref{line:async-all-cond:observeOtherOutput}, and therefore outputs $\{1 \xor v\}$, with some non-null probability. 
    Observe that any other process in $P_\text{out} \setminus \{p\}$ outputs $v$, yielding the output set $\{v, 1 \xor v\}$.
    The possibility of $\{v\}$ follows similarly, for example, in executions that $p$ crashes or also outputs $v$.
\end{enumerate}

The safety of Case 2 follows from General-Safety.
For Case 1, we also need to show the impossibility of $\varnothing$.
Since $\nooutput=\ffalse$ and $t<n$, at least one process must pass the check in \cref{line:async-all-cond:PNL-Pick} or \cref{line:async-all-cond:leaderPick} and thus output a value (either at \cref{line:async-all-cond:PNL-Output} for $p'$, or \cref{line:async-all-cond:observeOtherOutput}, \cref{line:async-all-cond:noOtherOutput} for $p$).
Therefore, $\varnothing$ is impossible.

The completeness of Case 1 follows from General-Completeness.
For Case 2, we also need to show the possibility of $\varnothing$.
Consider an execution where all processes have either crashed before outputting, or picked $1$ at \cref{line:async-all-cond:PNL-Pick} or \cref{line:async-all-cond:leaderPick}, yielding the output set $\varnothing$.
\end{proof}

\subsection{Synchronous Binary Consensus Algorithm (Line 10)}

The algorithm of this section (\Cref{alg:sync-consensus}) addresses the $\Sync$ case of line 10 in \Cref{tab:characterization}.
Every process $p \in P$ starts by executing the \textit{communication step} of the first synchronous round (\cref{line:sync-consensus:pick,line:sync-consensus:comm}).
They first pick a pseudo-random value $v$ from the set $\{0,1\}$ at \cref{line:sync-consensus:pick}, and communicate a $\proposei(v)$ information at \cref{line:sync-consensus:comm}.
Then, every $p \in P$ moves to the \textit{computation step} of the first synchronous round (\cref{line:sync-consensus:out-0,line:sync-consensus:out-1}).
At this stage, we can show that all non-crashed processes see the same set of $\proposei(v)$ information.
Process $p$ output $0$ if it has observed any $\proposei(0)$ information during round $1$ (\cref{line:sync-consensus:out-0}), otherwise, $p$ output $1$ (\cref{line:sync-consensus:out-1}).

\begin{theorem} \label{thm:sync-consensus}
Under $\Sync$ and $t < n \geq 1$, \Cref{alg:sync-consensus} implements the set of output sets $O=\{\{0\},\{1\}\}$.
\end{theorem}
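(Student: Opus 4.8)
The plan is to verify the two requirements of the implementation definition for $O = \{\{0\},\{1\}\}$: \textbf{Safety} (every execution, for every $0 \le f \le t$, produces output set $\{0\}$ or $\{1\}$, and never $\varnothing$ nor $\{0,1\}$) and \textbf{Completeness} (both $\{0\}$ and $\{1\}$ are attainable by some execution).

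First I would establish the central agreement fact promised in the algorithm's description: all processes that reach the computation step of round $1$ observe the same set of $\proposei(\cdot)$ information. Each process communicates its proposal at \cref{line:sync-consensus:comm} during the communication step of round $1$; by C-Synchrony, any proposal that is observed at all is observed during that same communication step. By C-Global-Termination, all correct processes observe exactly the same set of proposals, and the synchronous lock-step semantics (all information communicated in a round's communication step is observed within it) extend this consistent view to every process that reaches the computation step, including those that crash immediately afterwards. In particular, either some $\proposei(0)$ lies in the common view of every deciding process, or none does.

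From this fact, Safety follows. The impossibility of $\{0,1\}$ is immediate: a process outputs $0$ (\cref{line:sync-consensus:out-0}) exactly when it observed some $\proposei(0)$ and outputs $1$ (\cref{line:sync-consensus:out-1}) exactly when it did not, so a shared proposal view forces every deciding process onto the same branch. The impossibility of $\varnothing$ follows from $t<n$: at least one process is correct, reaches the computation step of round $1$, and outputs a value, so the output set is nonempty. For Completeness I would exhibit two PRNG seeds. Under a seed making every process pick $v=0$ at \cref{line:sync-consensus:pick}, some $\proposei(0)$ is communicated and, by the agreement fact, observed by all deciding processes, so every outputting process outputs $0$; since $t<n$ at least one does, yielding output set $\{0\}$. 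Symmetrically, a seed making every process pick $v=1$ produces no $\proposei(0)$ at all, so by C-Validity no process ever observes $\proposei(0)$ and every outputting process outputs $1$, yielding $\{1\}$.

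The main obstacle is the agreement fact in the first step: rigorously ruling out that a process which crashes during round $1$ — after reaching the computation step and outputting — could hold a proposal view inconsistent with the correct processes, thereby outputting the minority value and producing $\{0,1\}$. This is precisely where the synchronous lock-step semantics (C-Synchrony together with the in-round observation guarantee) must be invoked, since C-Global-Termination alone constrains only the views of \emph{correct} processes.
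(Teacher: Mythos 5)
Your proof is correct and follows essentially the same approach as the paper's: both establish that C-Synchrony and the lock-step round semantics force a common view of the round-$1$ proposals (hence no $\{0,1\}$), use $t<n$ to rule out $\varnothing$, and exhibit one execution per output set for completeness. Your treatment is in fact slightly more careful than the paper's in two spots: you explicitly argue that a process crashing after its output step cannot hold a view inconsistent with the correct processes (the paper attributes this entirely to C-Synchrony without comment), and for the $\{1\}$ case you invoke C-Validity (no $\proposei(0)$ communicated implies none observed), which is more direct than the paper's appeal to the termination properties.
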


\begin{proof}
Under $\Sync$ and $t < n \geq 1$, there is at least $n-t \geq 1$ correct process $p$.
The safety of \Cref{alg:sync-consensus} is proved in the following.
\begin{itemize}
    \item Impossibility of $\varnothing$.
    Process $p$ necessarily outputs a value at \cref{line:sync-consensus:out-0} or \cref{line:sync-consensus:out-1}, therefore $\varnothing$ is impossible.
    
    \item Impossibility of $\{0,1\}$.
    Let us denote by $P_\text{obs} \subseteq P$ the set of processes that observe some $\outputi(v)$ at the end of round $1$.
    By C-Synchrony, all processes in $P_\text{obs}$ must have observed the same set of $\proposei(v)$ communicated at the beginning of round $1$, and they must have therefore taken the same branch at \cref{line:sync-consensus:out-0} or \cref{line:sync-consensus:out-1}, thus the output cannot contain 2 distinct values.
\end{itemize}

\begin{algorithm}[tb]
\ProcCode{$p \in P$ \textbf{in synchronous round} $R=1$}{
    \commStep
    $v \gets \prpick(\{0,1\})$; \label{line:sync-consensus:pick} \\
    $\communicate$ $\proposei(v)$; \label{line:sync-consensus:comm} \\
    \compStep
    \lIf{$p$ $\observed$ some $\proposei(0)$}{$\ooutput$ $0$;} \label{line:sync-consensus:out-0}
    \lElse{$\ooutput$ $1$. \Comment*[f]{\ie $p$ observed only \textup{$\proposei(1)$}'s}} \label{line:sync-consensus:out-1}
}
\caption{Synchronous algorithm for the $\Sync$ case of line 10 of \Cref{tab:characterization}, assuming $t < n \geq 1$.}
\label{alg:sync-consensus}
\end{algorithm}

The completeness of \Cref{alg:sync-consensus} is proved in the following.
\begin{itemize}
    \item Possibility of $\{0\}$.
    There is an execution where $p$ picks value $v=0$ at \cref{line:sync-consensus:pick}, communicates $\outputi(0)$ at \cref{line:sync-consensus:comm}, and observes its own $\outputi(0)$ and outputs $0$ at \cref{line:sync-consensus:out-0}.
    By the impossibility of $\{0,1\}$, this execution will produce output set $\{0\}$.
    
    \item Possibility of $\{1\}$.
    Let us denote by $P_\text{com} \subseteq P$ the set of processes that communicate some $\outputi(v)$ at the beginning of round $1$.
    There is an execution where all processes in $P_\text{com}$ have picked value $v=1$ at \cref{line:sync-consensus:pick}, and therefore communicate $\proposei(1)$ at \cref{line:sync-consensus:comm}.
    By C-Local-Termination, C-Global-Termination, and C-Synchrony, process $p$ must observe only $\proposei(1)$ information at the end of round $1$, and it must therefore output $1$ at \cref{line:sync-consensus:out-1}.
    By the impossibility of $\{0,1\}$, this execution will produce output set $\{1\}$. \qedhere
\end{itemize}
\end{proof}

\end{document}